\begin{document}

\newcommand{\edge}[2]{(#1,#2)}
\newcommand{\seppair}[2]{\langle#1,#2\rangle}

\title{Straight-line  Drawings of 1-Planar Graphs\thanks{Supported
by Deutsche Forschungsgemeinschaft (DFG) Br835/20-1}}
%
%\titlerunning{Abbreviated paper title}
% If the paper title is too long for the running head, you can set
% an abbreviated paper title here
%
\author{Franz J. Brandenburg\inst{1}} %\orcidID{0000-1111-2222-3333}}
\authorrunning{Franz J. Brandenburg}
\institute{94030 Passau, Germany
\email{brandenb@informatik.uni-passau.de}}
\maketitle              % typeset the header of the contribution

\begin{abstract}
 A graph is 1-planar if it can be drawn in the plane so that each
edge is crossed at most once. However, there are 1-planar graphs
which do not admit a straight-line 1-planar drawing. We show that
every 1-planar graph has a straight-line   drawing with a
two-coloring of the edges, so that edges of the same color do not
cross. Hence,
 1-planar graphs have geometric thickness two. In
addition,   each edge is crossed by edges with a common vertex if it
is crossed more than twice. The drawings use  high precision
arithmetic with numbers with $O(n \log n)$ digits and can be
computed in linear time from a
  1-planar drawing.
\end{abstract}

\section{Introduction}

Straight-line drawings of graphs, also known as   rectilinear or
geometric drawings, are an important topic in Graph Drawing, Graph
Theory, and  Computational Geometry. The existence of straight-line
drawings of planar graphs was discovered   by   Steinitz and
Rademacher \cite{sr-34}, Wagner \cite{w-bv-36}, F\'{a}ry
\cite{fary-48}, and Stein \cite{s-cm-51}. Hence, straight-line is no
restriction for planar graphs. The first algorithms for constructing
straight-line planar drawings need high precision arithmetic
\cite{con-dpgn-8585, cyn-lacdpg-84, t-convex-60, t-hdg-63}. Later,
de Fraysseix et al.~\cite{fpp-hdpgg-90} and Schnyder
\cite{s-epgg-90} showed  that planar graphs can be drawn
straight-line   on a grid of quadratic size. The drawings can be
convex, so that the faces are convex polygons if the graphs are
3-connected \cite{bfm-cd3cpg-07, ck-cgd-97, k-dpguco-96}. However,
the  produced drawings are not aesthetically pleasing,  since they
have a low angular resolution.

 A drawing (or embedding) of  a graph in the plane   is 1-\emph{planar} if each edge is crossed at most once.
 A graph is 1-planar if it admits such a drawing.   Here, straight-line is a real restriction.
 Thomassen
\cite{t-rdg-88} showed that a 1-planar drawing can be transformed
into a straight-line 1-planar drawing if and only if it does not
contain a B- or a W-configuration, see Fig.~\ref{fig:configs}. This
fact was rediscovered by Hong et.~al~\cite{help-ft1pg-12}. B- and
W-configurations are related to separation pairs, so that there is a
pair of crossed edges in the outer face of a component.
 Didimo \cite{d-ds1pgd-13} % and later Ackerman \cite{a-n1pg-14}
 showed that straight-line 1-planar drawings of
$n$-vertex graphs have at most $4n-9$ edges, whereas there are
1-planar graphs with $4n-8$ edges %  for $n=8$ and $n \geq 10$
\cite{bsw-1og-84}.

\begin{figure}[t]
\centering
\subfigure[ ] {    %0.35\textwidth}
     \includegraphics[scale=0.55]{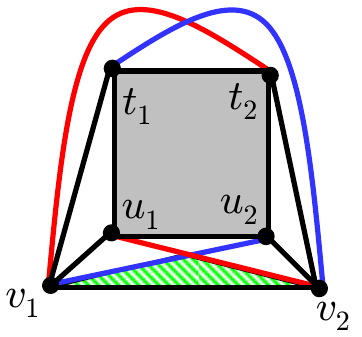}  %  GD21-Wconf
      \label{fig:Wconf}
  }
  \hspace{2mm}
\subfigure[ ] {    %0.35\textwidth}
    \includegraphics[scale=0.6]{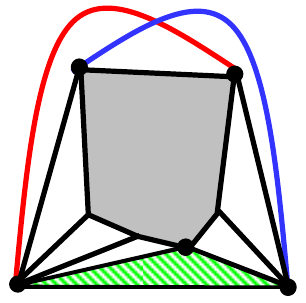}  %
      \label{fig:Bconf}
  }
  \hspace{2mm}
  \subfigure[ ] {    %0.35\textwidth}
    \includegraphics[scale=0.6]{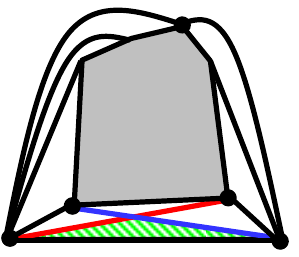}  %
      \label{fig:Bconf-reverse}
  }
%    \hspace{2mm}
%\subfigure[ ] {    %0.35\textwidth}
%    \includegraphics[scale=0.6]{Xconf21.pdf}  %
%      \label{fig:Xconf}
%  }
 \caption{(a) A W-configuration with base $\edge{v_1}{v_2}$, top vertices
 $t_1$ and $t_2$ and a
 first inner face drawn green with a texture, (b)
 a B-configuration with a pair of crossed edges in the outer face,
 which in (c) is removed
 by a flip of the component that changes the embedding.
  }
  \label{fig:configs}
\end{figure}

 Straight-line is also a real restriction for \emph{fan-crossing}
\cite{b-fan-20},  \emph{fan-crossing free}
\cite{b-fcf-21,cpkk-fan-15}, and 2-\emph{planar} graphs
\cite{bkr-optimal2-17}, which each generalize 1-planar graphs.
Fan-crossing graphs admit drawings with the crossing of an edge by
edges of a \emph{fan}, that is the crossing edges   are incident to
a common vertex, whereas such crossings are excluded for
fan-crossing free graphs \cite{b-FOL-18}. Then only crossings by
\emph{independent  edges} are allowed, that is the edges have
distinct vertices. Note that there are graphs that are fan-crossing
and fan-crossing free, but not 1-planar \cite{b-fan-fcf-18}.
Straight-line fan-crossing drawings are \emph{fan-planar}
\cite{ku-dfang-14}, which are fan-crossing and exclude crossings of
an edge from both sides. However, there are fan-crossing graphs that
are not fan-planar \cite{b-fan-20}.   A fan-crossing free drawing of
an $n$-vertex graph with $4n-8$ edges is 1-planar
\cite{cpkk-fan-15}, which cannot be drawn straight-line
\cite{d-ds1pgd-13}. Finally, the crossed dodecahedron graph is
2-planar and fan-crossing, but it does not admit a straight-line
2-planar drawing, since it has a unique 2-planar embedding
\cite{bkr-optimal2-17}, which is shown in
Fig.~\ref{fig:Xdodecaeder}.

\begin{figure}[h]
\centering
\subfigure[ ] {    %0.35\textwidth}
     \includegraphics[scale=0.35]{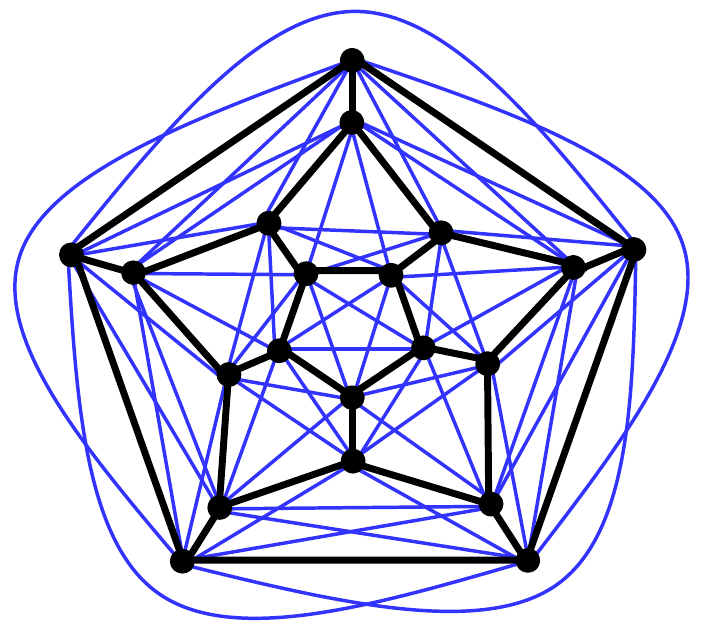}  %  Fig-biplanar1
      \label{fig:Xdodecaeder}
  }
  \hspace{12mm}
\subfigure[ ] {    %0.35\textwidth}
    \includegraphics[scale=0.3]{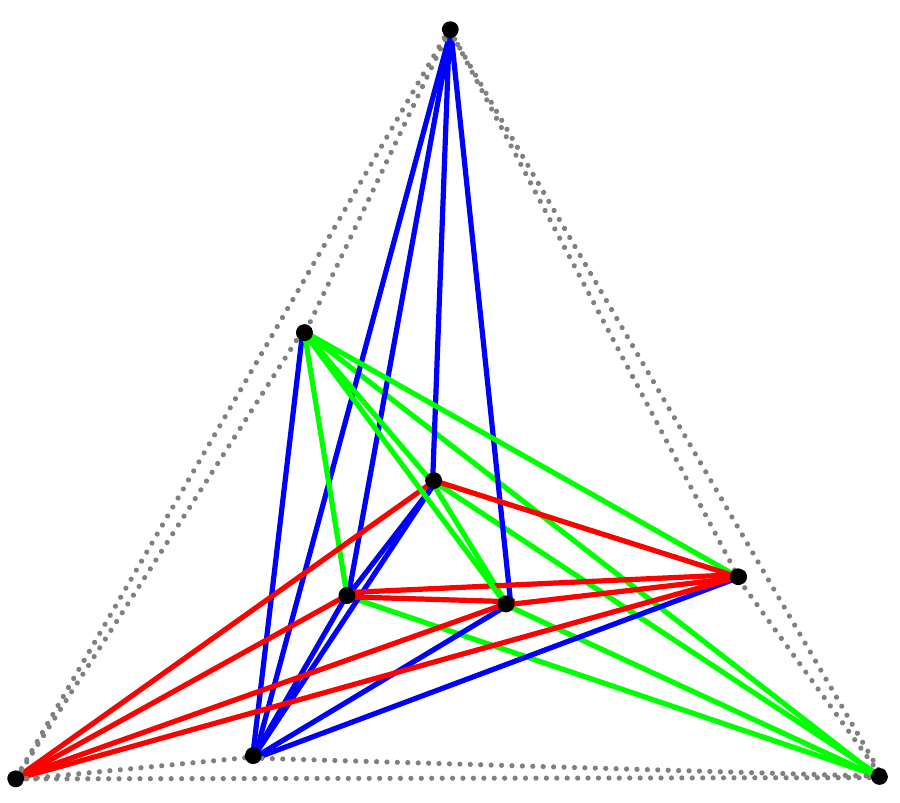}  %  Fig-K9-thick3
      \label{fig:K9-thick3-a}
  }
 \caption{(a) The crossed dodecahedron graph, which does not admit
 a straight-line  2-planar drawing and (b) $K_9$, which  has
 rectilinear thickness three. Dotted edges are uncrossed and can be
 colored arbitrarily.
  }
  \label{fig:examples}
\end{figure}

  A set of $k + \ell$ edges of a graph is said to form a $(k,
\ell)$-\emph{grid}  \cite{afps-grids-14, ppst-tgnlg-05} if each of
the first $k$ edges crosses all of the remaining $\ell$ edges,   see
Fig.~\ref{fig:grids}. A $(k, \ell)$-grid is \emph{radial} if, in
addition, the first $k$ edges are incident to the same vertex, that
is they form a fan, and \emph{natural} if all $k+\ell$ edges are
independent. Hence, a graph is $k$-planar, fan-crossing and
fan-crossing free, respectively, if and only if it avoids $(k+1,1)$,
natural $(2,1)$ and  radial $(2,1)$-grids, respectively. We call a
graph \emph{tri-fan-crossing} if it admits a drawing so that
 each edge is crossed by edges of a fan
 if the edge is crossed at least three times.

\begin{figure}[t]
\centering
\subfigure[ ] {    %0.35\textwidth}
     \includegraphics[scale=0.45]{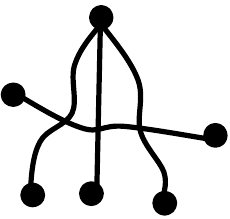}  %  Fig-biplanar1
      \label{fig:grid-fan}
  }
  \hspace{4mm}
\subfigure[ ] {    %0.35\textwidth}
    \includegraphics[scale=0.45]{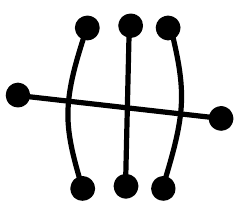}  %  Fig-biplanar1
      \label{fig:grid-fcf}
  }
  \hspace{4mm}
  \subfigure[ ] {    %0.35\textwidth}
    \includegraphics[scale=0.45]{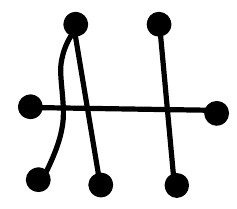}  %  Fig-biplanar1
      \label{fig:grid-mix}
  }
    \hspace{4mm}
\subfigure[ ] {    %0.35\textwidth}
    \includegraphics[scale=0.45]{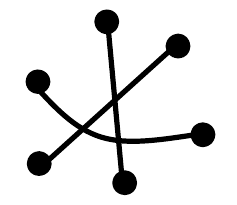}  %  Fig-biplanar1
      \label{fig:quasi}
  }
 \caption{(a) A radial $(3,1)$-grid or a fan-crossing, (b) a natural $(3,1)$-grid
 and a fan-crossing free drawing, (c) a $(3,1)$-grid, and (d)  three
 pairwise crossing edges.
  }
  \label{fig:grids}
\end{figure}

 The \emph{thickness} of a graph %, introduced by Tutte
\cite{t-thickness-63}  is the minimum number of planar graphs into
which the edges can be partitioned. Thickness has received much
attention \cite{mos-thickness-98}.  It has   applications in VLSI
design \cite{aks-mgev-91}, where crossed edges must be embedded in
different layers, and in graph visualization, where there is an edge
coloring so that edges of the same color do not cross. The thickness
of a graph is bounded by its \emph{arboricity}, which is the minimum
number of forests into which a graph can be decomposed. The
arboricity of a graph is the maximum density of any subgraph, that
is $\max \{\lceil \frac{|E[U]|}{|U|-1}\rceil\}$, where $U$ is a
subset of vertices and $E[U]$ is the set of edges with both vertices
in $U$
 \cite{nash-arboricity-61}.  It can be computed in $O(n^{3/2})$ time if graphs have $O(n)$ many edges
 \cite{gw-ffg-92}.
 One third of the arboricity is a lower bound for
 the thickness. %It can be computed efficiently.

The \emph{rectilinear thickness} of a graph G, also known as
\emph{real linear thickness} \cite{k-tcg-73} or \emph{geometric
thickness} \cite{e-stgt-02}, is the minimum number of colors in a
straight-line drawing of $G$, so that edges of the same color do not
cross. It is a real restriction, since
 complete graphs  have thickness $\lfloor (n+7)/6
 \rfloor$ for $n \neq 9,10$ \cite{ag-thickness-76},
  whereas they have geometric thickness at least
$\lceil (n+1)/5.646 \rceil$ \cite{deh-gtcg-00}. Moreover,
Eppstein~\cite{e-stgt-02} showed that for every $k$, there are
graphs with thickness three and rectilinear thickness at least $k$.
Hence, geometric thickness is not bounded by thickness.

Graphs with thickness two, also called \emph{biplanar} graphs
\cite{harary-61}, were studied by Hutchinson et
al.~\cite{hsv-rstg-99}. They showed that graphs with rectilinear
thickness two have at most $6n-18$ edges, and that the complete
graph $K_8$ is the largest complete graph with rectilinear thickness
two,    since $K_9$ and $K_{10}$ have thickness three
\cite{ag-thickness-76}. Note that $K_9$ has rectilinear thickness
three, as shown in Fig.~\ref{fig:K9-thick3-a}.

Geometric thickness specializes to \emph{book thickness} if all
vertices are placed in convex position.  It is known that planar
graphs have book thickness four \cite{y-epg4p-89}, where the lower
bound was proved just recently \cite{bkkpru-4pages-20,y-4pages-20},
and that the book thickness of 1-planar graphs is bounded by a
constant \cite{bbkr-book1p-17}.

In the remainder of this work, we introduce basic notions in Section
\ref{sect:prelim} and recall   methods for straight-line drawings of
planar graphs. In Section \ref{sect:main}, we first show that every
3-connected 1-planar graph admits a straight-line biplanar grid
drawing. Such drawings scaled down for the general case so that they
fit into the first inner face.

\section{Preliminaries} \label{sect:prelim}

We consider simple undirected graphs with $n$ vertices and colored
edges and assume that graphs are biconnected  and are given with a
drawing (or a simple topological  embedding) in the plane. For
convenience, we do not distinguish between a graph  and its 1-planar
drawing, a vertex,  its point in a drawing, and its position in a
canonical ordering, and  an edge and its line in a drawing. We also
refer to a leftmost lower neighbor or a lower (left) subgraph if
this is clear from the drawing of a planar graph.

Straight-line grid drawings of planar graphs can be constructed by
using the canonical ordering and the shift method, introduced by de
Fraysseix et al.~\cite{fpp-hdpgg-90}. Alternatively, Schnyder
realizers can be used \cite{s-epgg-90}. A \emph{canonical ordering}
of a triconnected planar graph $G$ is a bucket order $V_1, \ldots,
V_m$ of its vertices, so that further properties are satisfied
\cite{k-dpguco-96}. A bucket $V_i$ consists of a single vertex or a
set of vertices forming a path that is a part of the boundary of a
face of a planar drawing of $G$. The first bucket consists of two
vertices $v_1$ and $v_2$ in the outer face, called the \emph{base}.
The last bucket consists of another vertex in the outer face. Let
$G_k$ denote the subgraph induced by  the vertices from the first
$k$ buckets. Its outer face, called \emph{contour}, contains $v_1,
v_2$ and the vertices of $V_i$. Every $G_k$ is biconnected and
internally triconnected, so that separation pairs are on the
contour.
%% \textbf{For every $i=2,\ldots,m-1$, if $V_i = z_1,\ldots, z_{\ell}$ is a path, then
%% $z_1$ and $z_{\ell}$ have one lower neighbor on $C_{i-1}$ and $z_2,\ldots, z_{\ell-1}$ have no neighbors on $C_{i-1}$.}

A canonical ordering of a triconnected planar graph can be computed
in linear time \cite{k-dpguco-96}. In general, a planar graph has
many canonical orderings, for example a leftmost and a rightmost one
\cite{k-dpguco-96}. In particular, if the base  is fixed, then  any
other vertex in the outer face can be chosen as the last vertex. If
the last vertex $t_1$ is in a separating triangle $(t_1, t_2, t_3)$,
then there are canonical orderings $t_2 <t_3 < w < t_1$ and $t_3 <
t_2 < w < t_1$, where vertex $w$ is in the interior of $(t_1, t_2,
t_3)$,  since there is a canonical ordering of a subgraph with $t_1$
and the vertices in the interior of $(t_1, t_2,t_3)$ removed and
with $t_2$ or $t_3$ as last vertex. The vertices in the interior of
$(t_1, t_2,t_3)$  follow $t_2$ and $t_3$ by biconnectivity. \\

 The drawing of a graph in the plane is 1-\emph{planar} if each edge is crossed
at most once. Then edges cross in pairs. The crossing point
partitions a crossed edge into two uncrossed \emph{segments},
whereas an uncrossed edge consists of a  single segment. A graph $G$
is 1-\emph{planar} if it admits a 1-planar drawing. A 1-planar
drawing is \emph{planar-maximal} if no uncrossed edge can be added
without violating 1-planarity. This property is assumed from now on.
Alternatively, one may consider \emph{maximal} 1-planar drawings,
which do not  admit the addition of any edge without violation. The
\emph{planar skeleton} % $G_{\Box}$ of $G$
is obtained by removing all pairs of crossed edges.   For an
algorithmic treatment, we use the \emph{planarization}, which is
obtained from a 1-planar drawing by treating each crossing point as
a special vertex of degree four \cite{help-ft1pg-12}. Clearly, a
1-planar drawing can be augmented to a planar-maximal   one in
linear time.

A \emph{separation pair} $\seppair{u}{v}$ partitions a graph into
components, so that $G-\{u,v\} = H_0^-,\ldots, H_r^-$ for some
$r\geq 1$. Let $H_i$ denote the subgraph induced by the vertices of
$H_i^-$ and the vertices from the separation pair, so that the
common edge $\edge{u}{v}$ is in the outer face.   The (edges between
$u$ and vertices of the) components $H_0,\ldots, H_k$
  are ordered clockwise at $u$ and counter clockwise at $v$.
Two consecutive components are separated by one or two pairs of
crossed edges if $G$ is planar-maximal 1-planar. The \emph{outer
component} $H_0$  contains vertices in the outer face if all crossed
edges are removed. All other  components are called \emph{inner
components}. Every inner component $H_i$ has a \emph{first inner
face}, which is the face next to edge $\edge{u}{v}$ in the subgraph
induced by $H_0,\ldots, H_i$, that is the later inner components
$H_{i+1}, \ldots H_r$ are removed.

There is a B- or a W-\emph{configuration}  if a pair of edges
crosses in the outer face  of a component \cite{t-rdg-88}. Alam et
al.~\cite{abk-sld3c-13} observed that  B-configurations can be
avoided if the embedding is changed by a flip, see
Figs.~\ref{fig:Bconf} and \ref{fig:Bconf-reverse}. A
\emph{W-configuration} consists of six vertices $v_1,v_2, u_1, u_2,
t_1, t_2$, so that there are two pairs of crossed edges
$\edge{v_1}{u_2}$, $\edge{v_2}{u_1}$ and $\edge{v_1}{t_2}$,
$\edge{v_2}{t_1}$, see Fig.~\ref{fig:Wconf}. Vertices $v_1$ and
$v_2$ are  the \emph{base} with  edge $\edge{u}{v}$ uncrossed and
vertices $t_1$ and $t_2$ are on \emph{top}. By a flip, the pairs
$(t_1, t_2)$ and $(u_1, u_2)$ can be exchanged, so that $u_1$ and
$u_2$ are on top in another embedding.\\

The drawing   of an edge-colored 1-planar graph  is called
\emph{specialized} if there are vertices $v_1, v_2, t_1$ and $t_2$,
so that the outer face is an isosceles triangle  $(v_1, v_2, t_2)$
and  edges incident to $v_2$ are  black. Moreover,  the drawing is
1-planar, except for
  edge $\edge{v_1}{t_1}$, which is crossed by edges incident to
  $t_2$ and   edges crossing $\edge{v_1}{t_1}$ are crossed at most
  twice.

  Hence, a specialized drawing is tri-fan-crossing. It  is
  1-planar if either edge $\edge{v_1}{t_1}$ or vertex $t_2$ is removed,
    see Figs.~\ref{fig:fpp-2} and \ref{fig:gamma-prime}.

Alam et al.~\cite{abk-sld3c-13} have shown that every triconnected
1-planar graph can be drawn straight line on a grid of quadratic
size,  except, possibly, for one edge in the outer face.
  They use the algorithm by Chroback and Kant \cite{ck-cgd-97}
  for a drawing of the planar skeleton, whereas we prefer the one
   by Kant \cite{k-dpguco-96}, which
  creates edges with slope $-1, 0$ and $+1$ on the contour.

\begin{lemma} \label{lem:draw-3conn}
A triconnected 1-planar graph $G$ admits a straight-line 1-planar
drawing if and only if $G$ has at most $4n-9$ edges. If existing,
there is a specialized  drawing on a grid of size at most $(4n-8)
\times (2n-4)$, so that the outer face is an isosceles right-angled
triangle. An edge is uncrossed in the straight-line drawing if and
only if it is uncrossed in (the 1-planar drawing of) $G$.   Its
length $\ell$ is bounded by $1 \leq \ell \leq 4n-8$.
\end{lemma}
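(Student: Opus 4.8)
The plan is to prove the two directions of the equivalence separately and then read off the quantitative claims from an explicit construction. The ``only if'' direction is immediate from Didimo's bound: a straight-line 1-planar drawing of an $n$-vertex graph has at most $4n-9$ edges, so no such drawing can exist once $G$ has $4n-8$ edges. For the ``if'' direction I would follow the framework of Alam et al., who draw a triconnected 1-planar graph straight-line up to one exceptional edge in the outer face, but realise the planar part with Kant's canonical-ordering algorithm so that the slope property ($-1,0,+1$ on the contour) can be exploited.

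First I would preprocess the given 1-planar drawing. Augmenting it to a planar-maximal drawing and applying the flip of Alam et al.\ removes every B-configuration, so the only crossed pair left in an outer face stems from a W-configuration. Since $G$ is triconnected I would orient this configuration so that its base $\edge{v_1}{v_2}$ bounds the outer face and $t_1,t_2$ sit on top, matching the definition of a specialized drawing. Removing all crossed pairs yields the planar skeleton; after inserting the four quadrangle edges around each crossing (the kites) and triangulating the remaining faces, the skeleton becomes a triconnected planar graph whose faces are either kites or triangles, so that a canonical ordering is available.

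Next I would draw this skeleton with Kant's shift method, choosing the canonical ordering with base $\edge{v_1}{v_2}$ and a last vertex among the top so that the outer face is an isosceles right-angled triangle with legs of slope $\pm1$; scaling the width to $4n-8$ forces the apex to height $2n-4$ and gives the claimed grid. Because every contour edge has slope in $\{-1,0,+1\}$, each kite is drawn as a convex quadrangle, so reinserting its crossed pair as the two diagonals produces exactly one crossing inside the kite and meets no skeleton edge. This simultaneously establishes interior 1-planarity and the crossing-preservation claim: a skeleton edge stays uncrossed and a former crossed edge stays crossed. The outer top face contributes the edge $\edge{v_1}{t_1}$ crossed by the fan at $t_2$; this single exceptional edge is crossed exactly once, yielding a genuine straight-line 1-planar drawing, precisely when $|E|\le 4n-9$, whereas at $4n-8$ edges the fan forces several crossings, consistent with Didimo's bound.

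The length bounds follow from the grid: every segment runs between grid points of a drawing of width $4n-8$, so its Euclidean length is at least $1$ and at most the full width $4n-8$, the latter attained by the base $\edge{v_1}{v_2}$. I expect the main obstacle to be the convexity argument for the kites: one must verify that the shift method, under the chosen ordering, never places the four corners of a kite in a collinear or non-convex position, and that reinserting both diagonals of every kite creates no crossing with a neighbouring kite or triangle edge. The second delicate point is controlling the outer configuration so that exactly one edge remains exceptional, and tying the transition between a true 1-planar and a merely tri-fan-crossing drawing to the $4n-9$ versus $4n-8$ edge count.
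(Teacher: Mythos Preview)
Your ``if'' direction misses the structural fact that drives the paper's argument: a planar-maximal 1-planar graph has at most $4n-9$ edges if and only if its 1-planar drawing contains a face bounded by three \emph{uncrossed} edges. When such a triangle exists, one re-embeds so that it becomes the outer face; then there is no B- or W-configuration on the outer boundary at all, no exceptional edge, and Kant's convex drawing of the planar skeleton (made strictly convex by extra shifts) immediately yields a genuine straight-line 1-planar drawing once the crossed pairs are reinserted into their quadrangles. You instead keep a W-configuration on the outer boundary and then assert that the exceptional edge $\edge{v_1}{t_1}$ ``is crossed exactly once \ldots\ precisely when $|E|\le 4n-9$''. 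That assertion is unfounded: in the construction you describe (which is essentially that of Lemma~\ref{lem:3-connected-last}, not the present lemma), $\edge{v_1}{t_1}$ is crossed by \emph{every} edge from $t_2$ to a lower neighbour other than $v_1$ and $v_2$, and the size of this fan has nothing to do with whether $G$ has $4n-9$ or $4n-8$ edges. The edge count governs whether an uncrossed outer triangle exists, not how many edges are incident to $t_2$.

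A smaller gap concerns convexity. Kant's $\{-1,0,+1\}$ slopes hold only on the contour at the moment a vertex is inserted and guarantee convex, not strictly convex, inner faces; two sides of a quadrangle can end up collinear after later shifts, and then the reinserted diagonals do not cross inside the quadrangle. The paper repairs this with two extra shifts per degenerate quadrangle, and those shifts are precisely what double Kant's $(2n-4)\times(n-2)$ grid to the stated $(4n-8)\times(2n-4)$; your ``scaling the width to $4n-8$'' does not account for this and leaves the kite-convexity step, which you yourself flag as the main obstacle, unresolved.
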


\begin{proof}
Form \cite{bsw-bs-83, pt-gdfce-97} and \cite{d-ds1pgd-13} we obtain
that a 1-planar drawing of a planar-maximal 1-planar graph $G$ has a
  triangular face with three uncrossed edges if and only if $G$ has at most $4n-9$
edges. Choose this triangle as the outer face. As shown by Kant
\cite{k-dpguco-96}, there is a convex drawing of the planar skeleton
on a  grid of size at most $(2n-4) \times (n-2)$, so that the outer
face is an isosceles right-angled triangle. The drawing can be
turned into a strictly convex drawing on a grid of size at most
$(4n-8) \times (2n-4)$  if two extra shifts are used for any two
edges of a quadrangle that are in a line, as   shown in
\cite{abk-sld3c-13, br-scdpg-06}. See Fig.~\ref{fig:faces} for an
illustration. Every pair of crossed edges is reinserted into the
quadrangle from which it was removed. Uncrossed edges are colored
black. There is a black and  a red edge if two edges cross, where
the coloring is chosen so that all edges incident to   vertex $v_2$
are black. Then the drawing is specialized. At its creation, an
uncrossed edge has length at least one, since there is a grid
drawing. Later on, edges are stretched by horizontal shifts. The
base is the longest edge and is a horizontal line of length at most
$4n-8$.
\end{proof}

\begin{figure}[t]
\centering
\subfigure[ ] {    %0.35\textwidth}
     \includegraphics[scale=0.3]{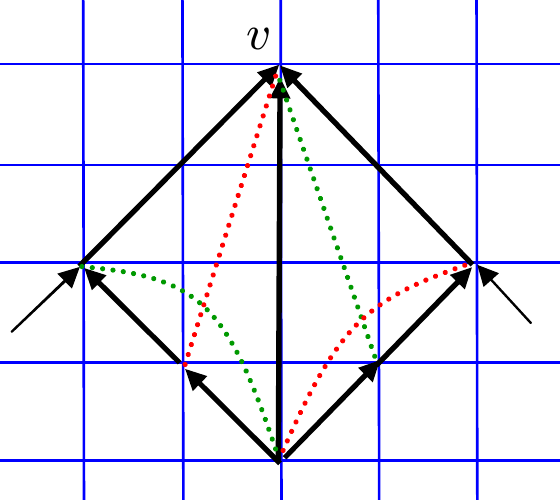}  %  Fig-GD21-5
      \label{fig:conv-faces1}
  }
  \hspace{5mm}
\subfigure[ ] {    %0.35\textwidth}
    \includegraphics[scale=0.3]{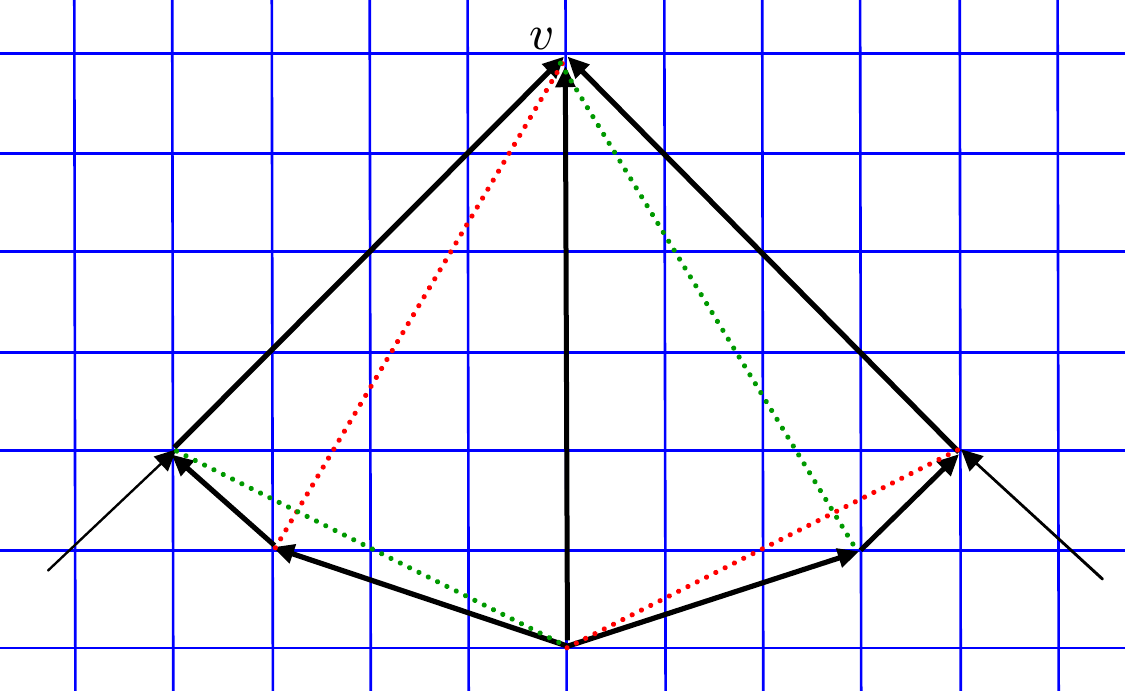}  %  Fig-GD21-4
      \label{fig:conv-faces2}
  }
 \caption{Placement of vertex $v$ (a) in general and (b) for strict
 convexity.
  }
  \label{fig:faces}
\end{figure}

\section{Straight-line Biplanar Drawings of 1-Planar Graphs}
\label{sect:main}

It is clear that every 1-planar graph is biplanar. In fact, every
1-planar graph can be decomposed into a planar graph and a forest
\cite{a-n1pg-14}.  Towards a straight-line biplanar drawing, we
first consider triconnected 1-planar graphs with a pair of  edges
crossing in the outer face. These edges must be redrawn. We show
that there is a straight-line biplanar grid drawing, which is
1-planar with the exception of edges incident to the top vertices of
a W-configuration. In general,  there is a separation pair if there
is a pair of crossed edges in the outer face of a component. By
induction, every inner component at a separation pair has a
straight-line biplanar drawing, which is scaled down, so that it
fits into the first inner face of the outer component. So we obtain
a  biplanar drawing of a 1-planar graph. By upper and lower bounds
on the length of (segments of) edges, we can estimate that a single
inner component is scaled down at most by $c/n^5$ for some $c>0$. In
return,  the outer component is scaled up by $O(n^5)$. By induction
on the number of inner components and recursion on separation pairs,
the upscaling is bounded by $O(n^n)$, so that we use high precision
arithmetic with numbers with $O(n \log n)$ many digits.

\begin{lemma} \label{lem:3-connected-last}
Let $G$ be a  3-connected  1-planar graph   with a W-configuration
 in the outer face, so that the vertices $v_1$ and $v_2$ from the base
 are the  first   and  the vertices $t_1$ and $t_2$ on top are the last two vertices in
a canonical ordering of the planar skeleton of $G$. Then $G$ admits
a specialized straight-line  biplanar drawing $\Gamma(G)$ on a grid
of quadratic size.   A segment of an edge $e$ has length $\ell$ with
$1 \leq \ell \leq 8n-16$ in $\Gamma(G)$ if $e$ is uncrossed in (the
1-planar drawing of) $G$.
\end{lemma}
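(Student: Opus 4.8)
The plan is to reduce the drawing of a 3-connected 1-planar graph with a W-configuration to the already-established Lemma~\ref{lem:draw-3conn}. The obstruction to applying that lemma directly is precisely the W-configuration: by Thomassen's characterization recalled in the introduction, a 1-planar drawing containing a W-configuration (or a B-configuration) cannot be realized straight-line while keeping every edge crossed at most once. So the strategy is not to draw $G$ itself 1-planar, but to draw a slightly modified planar-maximal supergraph straight-line, and then reinsert the offending edges of the W-configuration so that the only violations of 1-planarity are confined to edges incident to the top vertices $t_1,t_2$.

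First I would isolate the top of the W-configuration. The W-configuration consists of $v_1,v_2,u_1,u_2,t_1,t_2$ with crossing pairs $\edge{v_1}{u_2},\edge{v_2}{u_1}$ and $\edge{v_1}{t_2},\edge{v_2}{t_1}$; the base edge $\edge{v_1}{v_2}$ is uncrossed and chosen as the outer base, while $t_1,t_2$ are placed last in the canonical ordering. The key idea is to consider the graph $G'$ obtained from the planar skeleton by the operations underlying a \emph{specialized} drawing: remove the problematic crossed edge $\edge{v_1}{t_1}$ (equivalently, suppress one top vertex), producing a graph whose remaining crossed-edge pairs can all be drawn inside their quadrangular faces. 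Since $G$ is 3-connected and has a triangular outer face $(v_1,v_2,t_2)$ available (after the flip that fixes the outer-face crossing, as in Figs.~\ref{fig:Bconf}--\ref{fig:Bconf-reverse}), Lemma~\ref{lem:draw-3conn} applies to this modified graph and yields a specialized straight-line drawing on a grid of quadratic size in which the outer face is an isosceles right-angled triangle, all edges incident to $v_2$ are black, and every other crossed edge sits straight-line inside its quadrangle crossed at most once.

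Next I would reinsert $\edge{v_1}{t_1}$ and route it as a straight segment to $t_1$, colored red. Because $t_1,t_2$ are the last canonical vertices and lie near the apex of the triangle, this single edge runs close to the boundary near $t_2$ and is crossed only by edges incident to $t_2$; each such crossing edge is itself crossed at most once by $\edge{v_1}{t_1}$ on one side, hence at most twice overall. This is exactly the specialized condition: the drawing is 1-planar except for $\edge{v_1}{t_1}$, which is crossed by a \emph{fan} at $t_2$, and the edges of that fan are crossed at most twice. The two-coloring is biplanar because red edges (one per crossing pair, chosen away from $v_2$) are pairwise non-crossing within their quadrangles and the reinserted red $\edge{v_1}{t_1}$ crosses only black fan edges; so same-colored edges never cross. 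The length bound follows by tracking the stretch: Lemma~\ref{lem:draw-3conn} gives segment lengths up to $4n-8$ on a $(4n-8)\times(2n-4)$ grid, and the extra shifts needed to accommodate the reinserted edge and to keep the apex crossings clean can at most double the horizontal extent, giving $1 \leq \ell \leq 8n-16$ and a grid that is still of quadratic size.

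The main obstacle I expect is the geometric placement of $t_1$ (and the routing of $\edge{v_1}{t_1}$) so that it is crossed \emph{only} by edges incident to $t_2$ and never by an independent edge, while simultaneously keeping each of those $t_2$-incident edges crossed at most twice. This requires controlling the region swept out near the apex: I would argue that, in the canonical/shift drawing, placing $t_1$ and $t_2$ last forces all edges into $(t_1,t_2)$ to emanate in a narrow angular cone near the top of the triangle, so that a straight segment from $v_1$ to $t_1$ can be slotted between the outer boundary and the bundle at $t_2$. Verifying that no \emph{earlier} edge wanders into this cone—i.e.\ that the fan-crossing structure is genuinely local to the top vertices—is the delicate combinatorial-geometric step; everything else is bookkeeping on grid sizes and the coloring invariant inherited from Lemma~\ref{lem:draw-3conn}.
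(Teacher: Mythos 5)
Your target structure is the same as the paper's: single out $\edge{v_1}{t_1}$, color it red, let it be crossed by a black fan at $t_2$, and keep every other crossing inside its strictly convex quadrangle, with one edge per crossing pair colored red. But the paper does \emph{not} obtain this by applying Lemma~\ref{lem:draw-3conn} as a black box to $G-\edge{v_1}{t_1}$ and reinserting the missing edge afterwards; it builds the drawing incrementally and places the two top vertices itself, and that placement is exactly the step your proposal leaves open. Concretely, the paper draws $G-\{t_1,t_2\}$ by the shift method, then adds $t_1$ \emph{together with} the straight red edge $\edge{t_1}{v_1}$ so that $(v_1,v_2,t_1)$ is an isosceles right-angled triangle, then performs extra shifts (moving $v_2$ by $2x+2\leq 4n-8$ units) so that every lower neighbor of $t_2$ other than $v_2$ lies strictly to the left of $t_1$, and finally places $t_2$ exactly one unit above $t_1$. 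Only because of this placement do the edges at $t_2$, drawn \emph{last}, cross the already-drawn red edge $\edge{t_1}{v_1}$ and nothing else; and only because the quadrangle partner of each such edge is forced red is each fan edge crossed at most twice. The bounds in the statement also come from here: the extra shifts enlarge the $(4n-8)\times(2n-4)$ grid of Lemma~\ref{lem:draw-3conn} to $(8n-16)\times(4n-8)$, whereas your ``can at most double the horizontal extent'' is asserted, not derived.

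The genuine gap is thus precisely the point you flag as ``delicate'': after a black-box application of Lemma~\ref{lem:draw-3conn} to a modified graph, you have no control over where $t_1$ sits relative to the bundle of edges at $t_2$, so a straight segment from $v_1$ to $t_1$ may cross black edges not incident to $t_2$, or red edges inside quadrangles, destroying both the specialized property and biplanarity. The ``narrow angular cone'' intuition is false in general: the neighbors of $t_2$ include $v_1$ and $v_2$ and can span the entire width of the drawing. Two further problems undermine the reduction itself. First, after deleting one crossed edge, the other crossing pair of the W-configuration still crosses in the outer face, so the modified graph need not satisfy the hypothesis of Lemma~\ref{lem:draw-3conn} (a triangular outer face with three uncrossed edges, equivalently at most $4n-9$ edges) without further argument. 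Second, the flip you invoke to repair this changes the embedding and relocates $t_1$ to an uncontrolled, possibly interior, position, which makes the subsequent straight reinsertion of $\edge{v_1}{t_1}$ even harder to analyze. The repair is to abandon the reinsertion order and argue as the paper does: place the red edge first, then draw the fan at $t_2$ last, with the extra shifts guaranteeing the crossing pattern by construction.
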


\begin{proof}
%Augment a  1-planar drawing of $G$ so that it is planar-maximal.
%Then remove all pairs of crossed edges, so that the planar skeleton
%$G_{\Box}$ remains.
The outer face of the planar skeleton of $G$ is a quadrangle with
vertices  $v_1, v_2, t_1$ and $t_2$ from the W-configuration. Let
$t_2$ be the last vertex and assume that there is no inner
quadrangle containing $v_2$ and $t_2$. Otherwise, let $t_1$ be the
last vertex. The canonical ordering is $\{v_1,v_2\} < v < t_1 < t_2$
for any other vertex $v$. A bucket consists of one or two vertices,
since the faces of the planar skeleton are triangles or quadrangles.
There is a quadrangle if and only if it contains a pair of crossed
edges in the 1-planar drawing of $G$.

Our algorithm uses the shift method and incrementally adds a vertex
or the pair of vertices with a horizontal edge in between to an
intermediate drawing.
  Similar to Alam et al.~\cite{abk-sld3c-13}, we extend  the
algorithms by de Fraysseix et al.~\cite{fpp-hdpgg-90} and   Kant
\cite{k-dpguco-96}, so that there are strictly convex inner faces
for a planar drawing of the planar skeleton.
 Our algorithm is based on the algorithm for 3-connected planar
graphs by Kant \cite{k-dpguco-96}, which constructs a drawing with
convex inner faces on a grid of size  $(2n-4) \times (n-2)$.
% It uses slopes of -1, 0 and +1 at a placement of a vertex, whereas Alam et
% al. use the drawing algorithm by Chrobak and Kant~\cite{ck-cgd-97},
% which uses a smaller grid but almost vertical slopes.
Strict convexity for inner quadrangles is obtained by extra shifts,
so that there is a straight-line planar drawing of the planar
skeleton on a grid of size at most $(4n-8) \times (2n-4)$
\cite{abk-sld3c-13, br-scdpg-06}, see Fig.~\ref{fig:faces}.
% Then each pair of crossed edges is reinserted into the quadrangle from
% which is was removed, except for the edges of the W-configuration
% $\edge{v_1}{t_2}$ and $\edge{v_2}{t_1}$ that cross in the outer face.

In detail, we construct the biplanar drawing $\Gamma(G)$ as follows.
  The planar subgraph of $G- \{t_1, t_2\}$ is
drawn as described before with a horizontal line for
   $\edge{v_1}{v_2}$. All quadrangles containing
$t_2$ are prepared for strict convexity, so that edges $\edge{a}{b}$
and $\edge{b}{c}$ are not in a line  for a quadrangle $(a,b,c,
t_2)$.
   Crossed edges are inserted in the interior of the quadrangle
  from which they were removed if they are not incident to $t_1$ or $t_2$.
By Lemma~\ref{lem:draw-3conn}, the intermediate drawing is
straight-line and 1-planar on a  grid of size at most $(4n-16)\times
(2n-8)$. The uncrossed edges have length at least one and at most
$4n-16$.

By the restrictions for a canonical ordering \cite{k-dpguco-96}, the
last two vertices are placed one after another. First, add $t_1$ and
its incident edges to lower vertices including the crossed
 edge $\edge{t_1}{v_1}$, so that there is an isosceles right-angled
 triangle $(v_1,v_2,t_1)$ with $t_1$ on top. Edge  $\edge{t_1}{v_1}$ is colored red,
 so that it is transparent for black edges incident to $t_2$, see
 Fig.~\ref{fig:fpp-1}.
The lower neighbors of $t_2$ are on the contour of the drawing of
$G-t_2$ if edge $\edge{t_1}{v_1}$ is ignored. Let $v$ be the
rightmost lower neighbor of $t_2$ so that $v \neq v_2$.  If $v$ is
$x\geq 0$ units to the right of $t_1$, then shift $v_2$ by $2x+2$
extra units to the right and place $t_1$ at the intersection of the
diagonals through $v_1$ and $v_2$, so that it moves by  $(x+1, x+1)$
from its former position. Clearly, $2x+2 \leq 4n-8$. Thereafter, all
neighbors of $t_2$ are to the left of $t_1$, except for $v_2$, which
is the rightmost vertex.
  At last, shift $v_1$ one unit to the left and $v_2$ one unit to the right
  and place   $t_2$ at the intersection of the diagonals,
  so that $t_2$ is one unit vertically above
 $t_1$, and draw all edges incident to $t_2$ straight line, see Fig.~\ref{fig:fpp-2}.

%  Finally, add the crossed  edges in the interior of quadrangles.

All uncrossed edges of $G$  are colored black. Edge
$\edge{t_1}{v_1}$ is colored red. The edges incident to $t_2$ and
$v_2$ are colored black, so that all edges crossing an edge incident
 to $t_2$ or $v_2$ in the interior of a quadrangle
are colored red. Recall that there is no inner quadrangle with $t_2$
and $v_2$. For all other pairs of crossed edges in a quadrangle
choose any coloring, so that one crossed  edge is red and the other
is black.

By Lemma~\ref{lem:draw-3conn} and the placement of $t_1$, the
drawing of $G-t_2$ is straight-line 1-planar. The red edge
$\edge{t_1}{v_1}$ is the leftmost edge when $t_1$ is placed, so that
it does not cross any red edge. Hence, red edges do not cross, since
edges incident  to $t_2$ are black and other red edges are in the
interior of strictly convex quadrangles. Black edges do not cross if
they are not incident to $t_2$ by Lemma~\ref{lem:draw-3conn}. Edges
$\edge{t_2}{v_1}$ and $\edge{t_2}{v_2}$ are in the outer face and
are uncrossed.   Any other edge incident to $t_2$ is crossed by the
red edge $\edge{t_1}{v_1}$, which is not crossed by other edges, so
that $\edge{t_1}{v_1}$ is crossed by edges of a  fan at $t_2$. Edge
$\edge{t_2}{v}$ is crossed by an edge $\edge{u}{u'} \neq
\edge{t_1}{v_1}$ in $\Gamma(G)$ if and only if edges $\edge{t_2}{v}$
and $\edge{u}{u'}$ cross in the 1-planar drawing of $G$.
 Then $\edge{u}{u'}$ is colored red and there is a
strictly convex quadrangle $(t_2, u, v, u')$ in $\Gamma(G)$. Edge
$\edge{t_2}{v}$ is not crossed by any further edge, so that it is
crossed at most twice. Hence, $\Gamma(G)$   is a specialized
straight-line biplanar grid drawing.

The grid has size at most $(8n-16) \times (4n-8)$, since the size of
at most $(4n-8)\times (2n-4)$ by Lemma~\ref{lem:draw-3conn} is
extended by  at most $4n-8$ additional shifts for vertex $t_2$.
% Consider the length of segments of the edges.
Clearly, edge $\edge{v_1}{v_2}$ is the longest edge in $\Gamma(G)$
with a length bounded by $8n-16$. Uncrossed edges of $G$, that is
edges of the planar skeleton, are uncrossed in $\Gamma(G)$ if they
are not incident to $t_2$, and uncrossed edges incident to $t_2$ are
only crossed by $\edge{t_1}{v_1}$ in $\Gamma(G)$. An uncrossed edge
has length at least one by Lemma~\ref{lem:draw-3conn}. If
$\edge{t_2}{v}$ is uncrossed in $G$, then it has two segments from
$t_2$ to the crossing point with $\edge{t_1}{v_1}$ and from there to
$v$, see Fig.~\ref{fig:fpp-3}. The first segment has length greater
than one, since $t_2$ is placed one unit above $t_1$, and the second
segment has length $\ell
> 1$, since $v$ is
  at least two units below $t_1$ and $\edge{t_1}{v_1}$ has slope
   $(h-1)/h$, where $h$ is the height (or half of the width) of the drawing.
  Hence, segments of uncrossed edges of $G$ have length at least
  one, whereas segments of crossed edges can be very short.
\end{proof}

\begin{figure}[t]
\centering
\subfigure[ ] {    %0.35\textwidth}
     \includegraphics[scale=0.3]{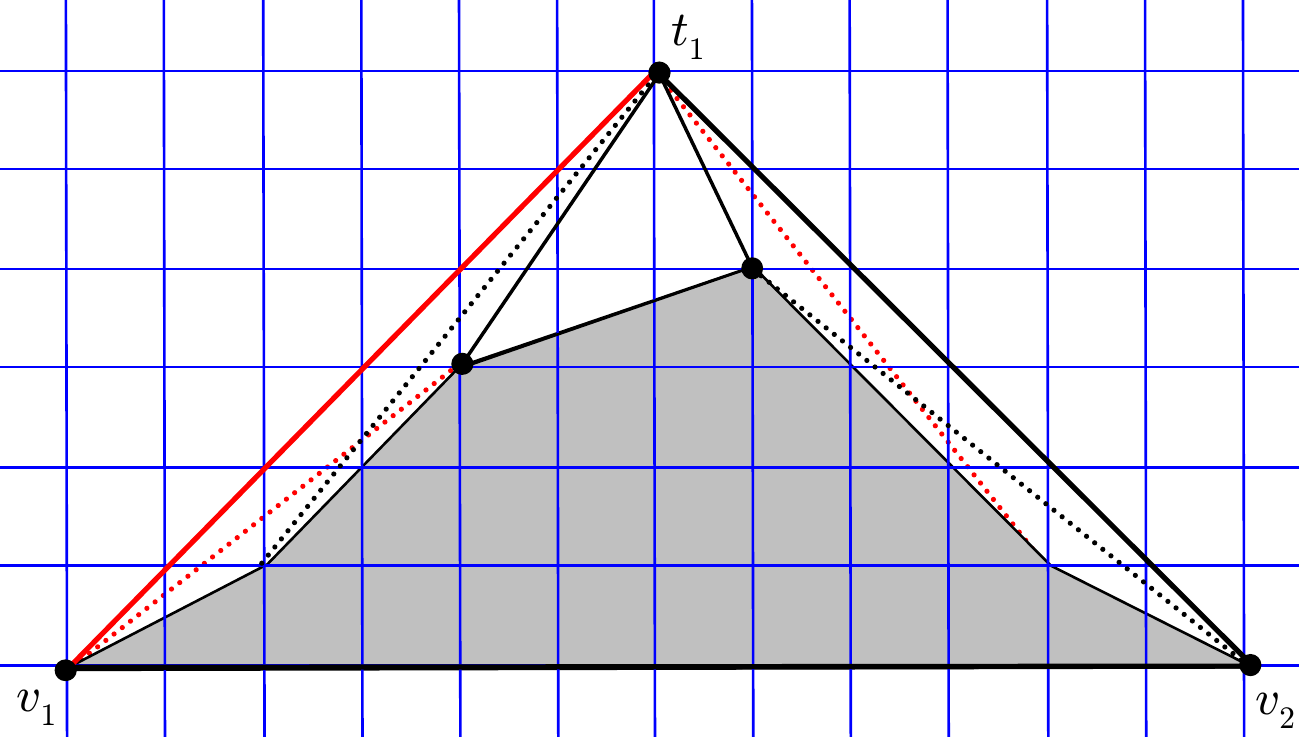}  %  Fig-GD21-4
      \label{fig:fpp-1}
  }
\subfigure[ ] {    %0.35\textwidth}
     \includegraphics[scale=0.3]{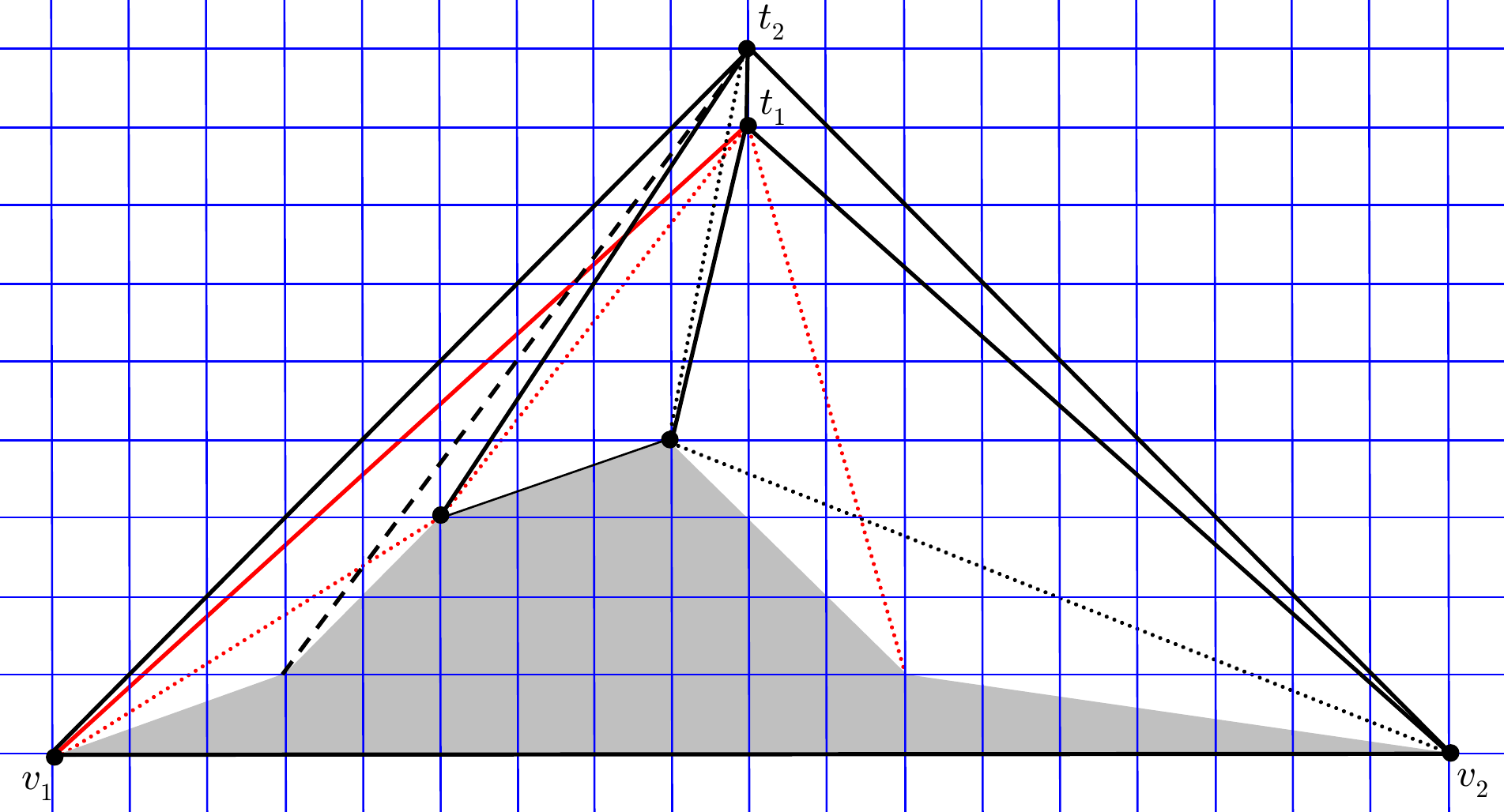}  %Fig-GD21-5
      \label{fig:fpp-2}
  }
\subfigure[ ] {    %0.35\textwidth}
     \includegraphics[scale=0.3]{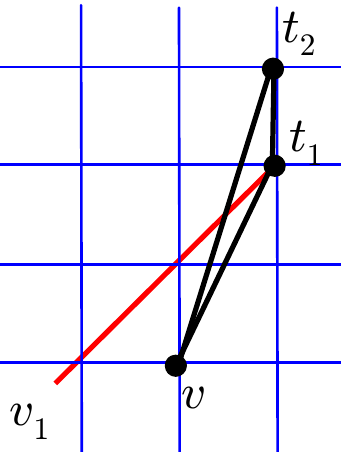}  %Fig-GD21-5
      \label{fig:fpp-3}
      }
 \caption{Illustration to the proof of Lemma~\ref{lem:3-connected-last}.
 (a) The drawing up to vertex $t_1$. (b) Extra shifts for $t_1$, the insertion of $t_2$
 and an edge coloring, and (c) the length of segments
  of edges crossed by $\edge{t_1}{v_1}$.
  }
  \label{fig:one-W-shifts}
\end{figure}

We now consider a W-configuration  with last vertex $t_1$, so that
the second top vertex $t_2$ cannot be the last but one vertex in the
canonical ordering. Then there are  separating triangles with
vertices $t_1$ and $t_2$. We can assume that a  flip of the
component at the base, as in Figs.~\ref{fig:Bconf} and
\ref{fig:Bconf-reverse}, does not help to avoid a separating
triangle with the top vertices. Otherwise,
Lemma~\ref{lem:draw-3conn} or Lemma~\ref{lem:3-connected-last} are
used for the changed embedding.

\begin{lemma} \label{lem:3-connected}
Every  3-connected  1-planar graph $G$ with a pair of crossed edges
  in the outer face admits a specialized straight-line biplanar
  drawing $\Gamma(G)$ on a grid of size $O(n^7)$. In addition,   the outer face is an isosceles
  obtuse angled triangle, and a segment of   edge $e$ in $\Gamma(G)$
  has length $\ell$  with $1 \leq \ell \leq c n^5$ for some $c>0$ if $e$ is uncrossed in
  (the 1-planar drawing of) $G$.
\end{lemma}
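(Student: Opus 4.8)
The plan is to reduce $G$ to the two cases already solved and to glue the pieces together along the separating triangle that blocks the canonical ordering. First I would dispose of the easy situation: after a flip removes any B\nobreakdash-configuration (Figs.~\ref{fig:Bconf} and~\ref{fig:Bconf-reverse}), the crossing in the outer face belongs to a W\nobreakdash-configuration with base $v_1,v_2$ and top $t_1,t_2$. If some canonical ordering of the planar skeleton puts $t_1,t_2$ last, then Lemma~\ref{lem:3-connected-last} already yields a specialized biplanar drawing on a quadratic grid, which is within the claimed bounds. So I may assume $t_2$ cannot be the last-but-one vertex; by the discussion preceding the lemma this is caused by a separating triangle $(t_1,t_2,t_3)$ with nonempty interior whose three boundary edges lie in the planar skeleton and are therefore uncrossed.

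The key structural observation I would exploit is that peeling the interior of this triangle splits the problem into two instances of the earlier lemmas, so that \emph{no genuine recursion} is needed. Let $G_{\mathrm{out}}$ be $G$ with the interior of $(t_1,t_2,t_3)$ deleted. Removing the interior removes the only obstruction to making $t_1,t_2$ the last two vertices, so $G_{\mathrm{out}}$ satisfies the hypotheses of Lemma~\ref{lem:3-connected-last} and admits a specialized biplanar grid drawing $\Gamma(G_{\mathrm{out}})$ in which $(t_1,t_2,t_3)$ is an inner face that I prepare for strict convexity. Let $H=G[\,\mathrm{int}(t_1,t_2,t_3)\cup\{t_1,t_2,t_3\}\,]$, which is again $3$-connected. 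Its outer boundary is the uncrossed triangle $(t_1,t_2,t_3)$, hence $H$ has a triangular face with three uncrossed edges; by the edge-count characterization used in the proof of Lemma~\ref{lem:draw-3conn} it then has at most $4n_H-9$ edges and admits a straight-line $1$-planar drawing $\Gamma(H)$ on a quadratic grid with $(t_1,t_2,t_3)$ as its outer face. In particular $H$ contains no further W\nobreakdash-configuration, so this step does not recurse, and if several separating triangles sit on $\edge{t_1}{t_2}$ I would treat each of their interiors in the same way.

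Next I would glue $\Gamma(H)$ into $\Gamma(G_{\mathrm{out}})$ by the affine map that carries the outer triangle of $\Gamma(H)$ onto the inner face $(t_1,t_2,t_3)$, and then scale the whole picture up so that all coordinates are integral again. Because the triangle acts as a separating moat, no edge of $H$ can cross an edge of $G_{\mathrm{out}}$, so biplanarity reduces to the two sub-drawings: red edges cross only red and black only black inside each piece, the three boundary edges of the triangle are uncrossed, and I colour them consistently with the specialized coloring so that all edges at $v_2$ stay black. The outer triangle is widened by the horizontal up-scaling from isosceles right-angled to isosceles obtuse-angled, as claimed.

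The step I expect to be the main obstacle is the quantitative analysis of this single scaling. The inner face $(t_1,t_2,t_3)$ of $\Gamma(G_{\mathrm{out}})$ lies on a quadratic grid, so it has area at least $1/2$ while its longest side is $O(n)$; hence its minimum altitude is only $\Omega(1/\mathrm{poly}(n))$, and the affine map that fits the $O(n)\times O(n)$ drawing $\Gamma(H)$ into such a possibly thin triangle shrinks its smallest feature by a factor $1/\mathrm{poly}(n)$. Using the lower bound that every uncrossed segment of $\Gamma(H)$ has length at least one (Lemma~\ref{lem:draw-3conn}) together with the upper bounds on the coordinates of both sub-drawings, I would show that an up-scaling by a polynomial factor restores the integer grid, and then bound the resulting coordinates and segment lengths to obtain the grid of size $O(n^7)$ and the bound $1\le\ell\le cn^5$ on segments of edges uncrossed in $G$. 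Getting the exponents right, in particular bounding the altitude of the triangular face from below and controlling the distortion of the affine map, is the only delicate part; everything else follows from Lemmas~\ref{lem:draw-3conn} and~\ref{lem:3-connected-last} and the moat argument.
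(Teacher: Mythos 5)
There is a genuine gap, and it sits exactly where the real difficulty of the lemma lies: your ``moat'' claim is false. When you draw $G_{\mathrm{out}}$ (the graph with the interior of the separating triangle $(t_3,t_1,t_2)$ removed) by Lemma~\ref{lem:3-connected-last}, the resulting drawing is only \emph{specialized}, not 1-planar: $G_{\mathrm{out}}$ still contains the W-configuration (whose crossing pairs lie in the outer face, not inside the separating triangle), so it admits no straight-line 1-planar drawing at all, and in the drawing produced by Lemma~\ref{lem:3-connected-last} the red edge $\edge{t_1}{v_1}$ crosses \emph{every} edge incident to $t_2$ other than $\edge{t_2}{v_1}$ and $\edge{t_2}{v_2}$ --- in particular it crosses $\edge{t_3}{t_2}$, since $t_3\neq v_1,v_2$. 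Hence the red edge enters the triangle $(t_3,t_1,t_2)$ at its corner $t_1$ and leaves through the opposite side $\edge{t_3}{t_2}$: the face into which you want to glue $\Gamma(H)$ is not empty. After the affine gluing, the straight red segment inside the triangle crosses whatever edges of $H$ separate $t_1$ from its exit point. These are arbitrary edges of $H$, not a fan at $t_2$; they may include a pair of edges that cross each other in $\Gamma(H)$, in which case you get three pairwise crossing straight-line edges, which cannot be 2-colored without a monochromatic crossing. So biplanarity (and certainly the specialized, tri-fan-crossing property) is destroyed, and no recoloring can repair it.

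This is precisely why the paper does not peel off the interior: it keeps the interior $G_4$ of the separating triangle inside the incremental construction, places $t_2$ low on the segment $\edge{s}{t_2}$ (with $s=t_3$ chosen as the least vertex in a separating triangle with $t_1,t_2$), lifts the would-be horizontal edges of quadrangles at $t_2$ to make $\edge{s}{t_2}$ flatter than all contour edges above it, and then moves $t_2$ upward \emph{along the line through $\edge{s}{t_2}$} past $t_1$. The technical core of the paper's proof is the slope/visibility argument showing that this motion keeps every edge from $t_2$ to its neighbors in $G_4$ and $G_2$ crossing-free except for the single designated red edge $\edge{t_1}{v_2}$, so that all multiple crossings still form a fan at $t_2$. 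Your proposal skips this argument, but it is unavoidable: Thomassen's obstruction forces some edge to be multiply crossed, and the edges it crosses necessarily run from $t_2$ into the very region you tried to seal off. (Secondary, fixable issues: you would still need to verify that $G_{\mathrm{out}}$ and $H$ remain 3-connected, and that deleting the interiors of the maximal separating triangles on $t_1,t_2$ really makes $t_2$ eligible as the last-but-one vertex; these can be argued, unlike the gluing step.)
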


\begin{figure}[t]
\centering
\subfigure[ ] {    %0.35\textwidth}
    \includegraphics[scale=0.45]{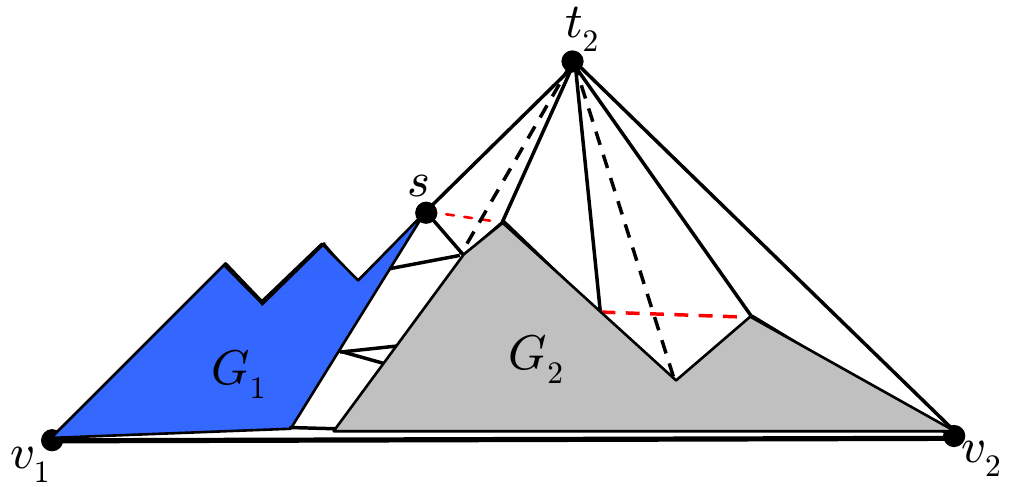}  %  biplanar-1
      \label{fig:uppergraph}
  }
  \hspace{1mm}
\subfigure[ ] {    %0.35\textwidth}
   \includegraphics[scale=0.45]{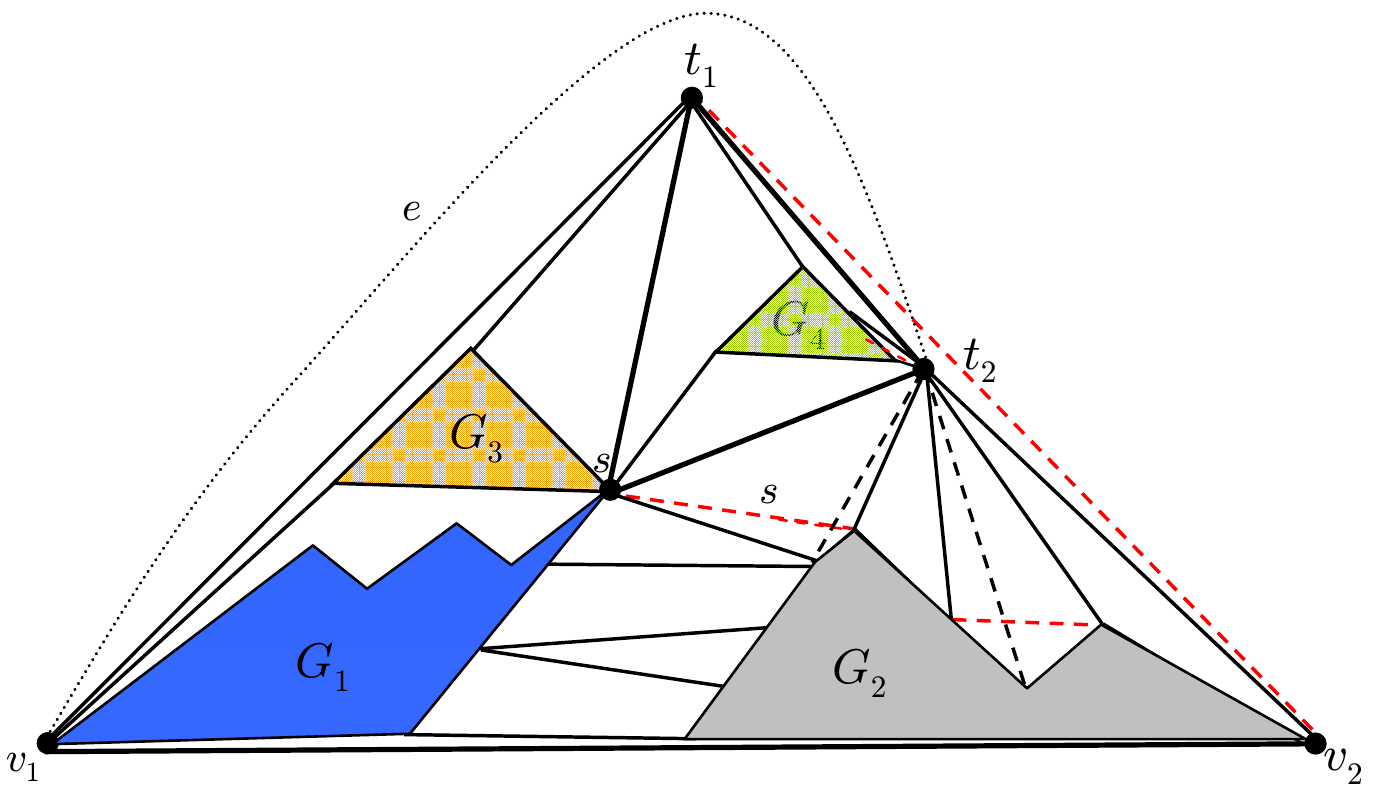}  %  biplanar-1
      \label{fig:gamma}
  }\\
  \subfigure[ ] {    %0.35\textwidth}
     \includegraphics[scale=0.45]{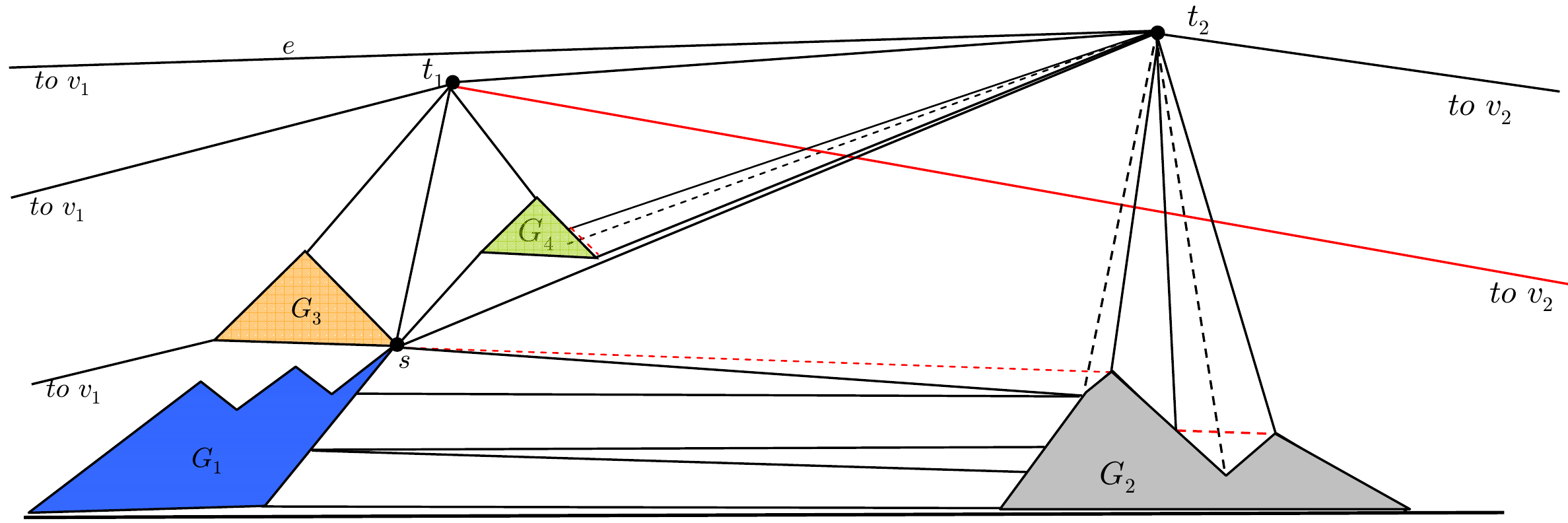}  %  biplanar-1
      \label{fig:gamma-prime}
  }
 \caption{Illustration to the proof of Lemma~\ref{lem:3-connected}.
 (a) A 1-planar drawing of $G$ up to vertex $t_2$, and
 (b) a 1-planar drawing of $G^-$, in which edge $e=\edge{v_1}{t_2}$, drawn black and dotted, is missing. Crossed edges are
 drawn dashed.
 (c) A biplanar drawing of $G$, which is obtained from the 1-planar
 drawing by  moving $t_2$  along
 edge $\edge{s}{t_2}$ above $t_1$ and shifting $G_2$.
  }
  \label{fig:lem-3connected}
\end{figure}

\begin{proof}
Consider a canonical ordering $v_1, v_2, \ldots,   t_1$ of the
planar skeleton of $G$, so that there is a W-configuration with base
$v_1, v_2$ and top vertices $t_1$ and $t_2$ in $G$  with $v_1$ left
of $v_2$. There is a separating triangle $(s,t_1, t_2)$ if $t_2$
cannot be the last but one vertex in a canonical ordering.  Choose
$s$ so that $(s,t_1, t_2)$
  is maximal in the sense that
 its vertices are not contained in another separating triangle
 with vertices $t_1$ and $t_2$. In other words,
  $s$ is the least  vertex in a separating triangle with $t_1$ and $t_2$.
 Vertices $s$ and $t_2$ are in the outer face of
the planar skeleton of $G -t_1$. Since there is a W-configuration,
we have $s \neq v_1$ and $s \neq v_2$.
 Assume $s < t_2$ in the canonical ordering. If this order is impossible,
 then there is a separating triangle $(s',t_1, t_2)$
   containing $s$, so that   $(s,t_1, t_2)$ is not maximal
   and $s$ is not the least such vertex, contradicting our assumption.
   Vertex $s$ is the leftmost
   lower neighbor of $t_2$, so that $t_2$ is first placed at the
   intersection of the diagonals through $s$ and $v_2$, as shown in
   Fig.~\ref{fig:uppergraph}.

Let $G^-$ be the subgraph of $G$ without the  crossed edge
$e=\edge{v_1}{t_2}$, drawn black and dotted in Fig.~\ref{fig:gamma}.
 Graph $G^-$ decomposes into four
subgraphs. Subgraph $G_4$ consist of the vertices in the interior of
the separating triangle $(s, t_1, t_2)$ including $s, t_1$ and
$t_2$, so that $s< t_2 < v < t_1$ for every vertex $v$ in the
separating triangle. Subgraph $G_3$ is induced by the vertices $v >
s$ and $v \not\in G_4$. Subgraph
 $G_2$ consists of the vertices $v < s$
 under  $t_2$, as described in \cite{ck-cgd-97, ck-mwgd-98,
cp-ltadpg-95, fpp-hdpgg-90, k-dpguco-96}, and $G_1$ is the remainder
with vertices $v < s$. If $s$ is to the left of $t_2$, then $G_1$ is
to the left of $G_2$  and $G_3$ is to the left of $G_4$. By the
shift method, the vertices of $G_2$ are shifted horizontally with
$t_2$.
% Thereby planarity and (strict) convexity are preserved \cite{fpp-hdpgg-90}.

We first construct a 1-planar drawing of $G^-$  as described  in
Lemma~\ref{lem:draw-3conn}, with the following modifications for
horizontal edges in $G_4$ and vertex $t_2$.
Vertex $t_2$ is first placed at the intersection of the diagonals
through $s$ and $v_2$, so that it is $h \geq 1$ units above $s$.
 Every vertex $v$ of  $G_4$ with $t_2 < v < t_1$ is placed above
the line $\edge{s}{t_2}$, using the shift method as described in
Lemma~\ref{lem:draw-3conn}, with the following exception. If there
is a    bucket in the canonical ordering % (of $G_{\Box}$)
with vertices $y$ and $z$,  so that $t_2$ is the rightmost lower
neighbor of $z$, then   there is a quadrangle $(x,y,z, t_2)$ in the
planar skeleton, so that $x$ is the leftmost lower neighbor of $y$
and $\edge{y}{t_2}$ is crossed by $\edge{x}{z}$ in $G$. The previous
algorithm \cite{k-dpguco-96} draws edge $\edge{y}{z}$ as a
horizontal line of length two. We draw it with slope $h/(h+2)$, so
that $(x,y,z,t_2)$ is a strictly convex quadrangle. This is achieved
by $2h$ extra shifts for $t_2$, so that $y$ is placed at $(h,h)$
from its former position.  Edges $\edge{x}{y}$ and $\edge{t_2}{z}$
have slope +1 and -1,
 respectively, when $y$ and $z$ are placed. The slope
 of $\edge{x}{t_2}$ is negative if $x \neq s$, since $x$  is placed
 after $t_2$ in this case. If there is a quadrangle $(s,y,z,t_2)$ just above
 $\edge{s}{t_2}$, then edge $\edge{s}{t_2}$ has slope $h/(3h+4)$ at
 the placement of $y$ and $z$, which is less than the slope of
 $h/(h+2)$ for $\edge{y}{z}$.
 Later on, edges are flattened by the shift method, which reduces the angular
 resolution, whereas a positive angle between adjacent adjacent edges is
 preserved.
If there is a triangle $(a,b,t_2)$ in the 1-planar drawing of $G_1$,
then $t_2$ is the rightmost lower neighbor of $b$, so that
$\edge{t_2}{b}$ has slope -1 when $b$ is placed. The slope of
$\edge{a}{b}$ is at least one if it is positive. It may be negative.

Since only techniques from the shift method \cite{fpp-hdpgg-90,
k-dpguco-96} are used, there is a straight-line strictly convex
drawing of the planar  skeleton of $G$ (or $G^-$) on a grid of size
at most $h(4n-8) \times h(2n-4)$, where $h$ is the vertical distance
between $s$ and $t_2$.
 In this drawing, edge $\edge{s}{t_2}$ has slope $\alpha$
 with $ \frac{h}{4hn_4} \leq \alpha < 1$,
 where $n_4$ is the number of vertices of $G_4$ in the interior of the triangle
 $(s, t_1, t_2)$, since $t_2$ is
 shifted horizontally at most $4h$ units per vertex of $G_4$
 according to Lemma~\ref{lem:draw-3conn}. The height  $h$ is bounded
 by  $1 \leq h \leq 2n_2-4$, since only vertices of $G_2$ shift
 $v_2$ relative to $s$, where $n_2$ is the set of vertices of $G_2$.
  The length of an
 uncrossed edge is at least one and at most $h(4n-8) < 8n^2$.
 The crossed edges are inserted into the quadrangles from which they
 were removed, so that there is a 1-planar drawing of $G^-$.

 This drawing is modified by moving $t_2$ above $t_1$, so that the missing  edge
 $\edge{v_1}{t_2}$  can be added for a
 specialized straight-line biplanar drawing $\Gamma(G)$.
 Note that $t_2$ has further neighbors only in $G_2$ and  $G_4$, since
 $(s, t_1,t_2)$ is a separating triangle and $s$ is the leftmost
 lower neighbor of $t_2$.
 Extend edge
 $\edge{s}{t_2}$ until it crosses the horizontal line
  two units above $t_1$, and move $t_2$ to the
    grid point at or  next right of the crossing point. Thereby,
 $t_2$ is moved $x_2$ units to the right and $y_2$ units upward.
 Edge $\edge{s}{t_2}$ is flattened, since $y_2 /x_2 \leq \alpha \leq y_2/(x_2-1)$.
 By the shift method, the vertices of $G_2$
 are shifted horizontally with $t_2$,  that is $x_2$ units to the
 right.
Next, reroute edge $\edge{t_2}{v_1}$ from $t_2$ through the grid
point one
  above $t_1$ until it crosses the x-axis, that is the horizontal
  line through $v_1$
and $v_2$, and  shift $v_1$  to the grid point at or just left of
the crossing point. Edge $\edge{t_1}{v_2}$ is rerouted and is drawn
along a ray through the grid point  one unit above $G_2$, which is
obtained by the tangent from $t_1$ to the convex hull of the drawing
of $G_2$. Then $v_2$ is shifted horizontally to the grid point at or
next right of the crossing point of the ray and the x-axis, see
Fig.~\ref{fig:gamma-prime}. Finally, shift $v_1$ or $v_2$
horizontally so that there is an isosceles triangle with top $t_2$.

Edge  $\edge{t_1}{v_2}$  is colored red. Edges incident to $t_2$ are
colored black, and their crossing edges are colored red. All
uncrossed edges and all crossed edges incident to $v_1$ are colored
black, whereas the coloring of the remaining pairs of crossed edges
can be chosen arbitrarily.
 This completes the construction of $\Gamma(G)$.

 We claim that $\Gamma(G)$
is a specialized straight-line biplanar grid drawing. Since there is
a straight-line 1-planar grid drawing for $G^-$, it suffices to
consider the modifications for $\Gamma(G)$. By construction, all
edges are drawn straight line  and all vertices are placed on grid
points.  Red edges do not cross, since they do not cross in the
1-planar drawing of $G^-$ and $\edge{t_1}{v_2}$ crosses only edges
  incident to $t_2$, which are black. Such an edge may be crossed by
  a red edge in the interior of a convex quadrangle.
 Two black edges do not cross if none of them is incident to
$t_2$, since they are drawn as in $\Gamma(G^-)$, or a vertex of the
edge (or both) is shifted horizontally if a vertex is in $G_2$.
Recall that horizontal shifts preserve planarity and convexity, as
observed at several places \cite{ck-cgd-97, fpp-hdpgg-90,
k-dpguco-96}. Clearly, two edges incident to $t_2$ do not cross,
since the edges are not drawn on top of one another. It remains to
consider a black edge  incident to $t_2$ and black edges of $G_2$
and $G_4$, see Fig.~\ref{fig:lem-3connected}.

Vertex $t_2$ and the vertices of $G_2$ are shifted $x_2$ units to
the right, and then $t_2$ is shifted $y_2$ units upward to its final
position. As observed by de Fraysseix et al.~\cite{fpp-hdpgg-90},
such shifts do not create crossings, and they preserve strict
convexity. In particular, a strict convexity of a quadrangle $(t_2,
s, u,u')$ with vertices $u, u' \in G_2$ is preserved, so that the
crossed edge $\edge{t_2}{u}$ is drawn straight-line in $\Gamma(G)$
and is crossed by the red edge $\edge{s}{u'}$. Hence, edges
$\edge{t_2}{v}$ and $\edge{u}{u'}$ with $u \in G_2$ cross in
$\Gamma(G)$ if and  only if they cross in in the 1-planar drawing of
$G^-$.

At last consider edges $\edge{v}{t_2}$ with $v \in G_4$ in the
1-planar drawing of $G^-$.  Vertex $v$ is placed above $t_2$ in
$\Gamma(G^-)$, but it is below $t_2$ in $\Gamma(G)$ after moving
$t_2$,  so that the slope of $\edge{v}{t_2}$  changes from negative
to positive, see Fig.~\ref{fig:lem-3connected}. We claim that moving
$t_2$ does not create new crossings with edges in $G_4$, since
$\edge{s}{t_2}$ is flat and horizontal edges in quadrangles with
$t_2$ are lifted. Consider the drawing of $G_4$ without vertex
$t_2$, which is obtained from the 1-planar drawing of $G^-$ by
removing $t_2$. The part of the outer face of $\Gamma(G - t_2)$ and
$\Gamma(G^- -t_2)$  between $s$ and $t_1$ consists of the neighbors
 $u_0,\ldots, u_r$ of $t_2$ in clockwise order, so that
 $s=u_0$ and $t_1=u_r$. These vertices are ordered in the canonical ordering, so that
 $u_i$ is placed after $u_{i-1}$.
 There are three cases:  (1) $u_{i-1}$ is the leftmost lower neighbor
 of $u_i$, (2) there is a quadrangle with vertices $u_{i-2}, u_{i-1}, u_i$
 and $t_2$, so that $\edge{u_{i-1}}{t_2}$ is crossed by $\edge{u_{i-2}}{u_i}$ or (3)
 $u_{i-1}$ is in the under-set of $u_i$.
In case (1), $u_i$ is placed  on the +1 diagonal through $u_{i-1}$,
so that $\edge{u_{i-1}}{u_i}$ has slope +1 when it is created. In
case  (2), edge $\edge{u_{i-1}}{u_i}$ is lifted and is not drawn as
a horizontal line, as in \cite{k-dpguco-96}. Edge
$\edge{u_i}{u_{i-2}}$ is red, since its crossing edge is incident to
$t_2$. Edge
 $\edge{u_{i-2}}{u_{i-1}}$ has slope +1 and
$\edge{u_{i-1}}{u_i}$ has slope $h/(h+2)$  when
  $u_{i-1}$ and $u_i$  are placed. At this moment, edge
$\edge{s}{t_2}$ has slope $h/W_i$ so that $W_i
> h+2$. In case (3), vertex $u_{i-1}$ is in the under-set of $u_i$,
so that (the slope of) edge $\edge{u_{i-1}}{u_i}$ is fixed, and is
only shifted from now on. Its slope is greater than one or may be
negative, which is even simpler, since $u_{i-1}$ is visible from
$t_2$ in $\Gamma(G)$ in the sense that there is an unobstructed
(uncrossed) line of sight between $u_{i-1}$  and $t_2$ if the red
edge $\edge{t_1}{v_2}$ is ignored.
% For convenience, suppose that the slope of
% $\edge{u_{i-1}}{u_i}$ is greater than one when   $u_i$ is placed.

Consider case (2) for the drawing of $G^-$. Cases (1) and (3) are
similar. Edge $\edge{u_{i-1}}{u_i}$ has slope $h/w_i$ and edge
$\edge{s}{t_2}$ has slope $h/W_i$ with $w_i < W_i$, when edge and
$\edge{u_{i-1}}{u_i}$ is drawn first.
Edge $\edge{u_{i-1}}{u_i}$ is flattened if vertices are placed above
it, so that $u_{i-1}$ and $u_i$ are shifted apart. Vertex $t_2$ is
shifted by the same quantity.   Hence, the slope of  $\edge{s}{t_2}$
remains less than the slope of $\edge{u_{i-1}}{u_i}$, since $h/w_i <
h/W_i$ implies %$\frac{h}{w_i+x} < \frac{h}{W_i+x}$
$h/(w_i+x) < h/(W_i+x)$ if there is a horizontal shift by $x$ units.
In other words, edge $\edge{s}{t_2}$ in $\Gamma(G)$ is flatter that
any edge $\edge{u_{i-1}}{u_i}$ for
 with $1\leq i < r$. By induction, edge $\edge{u_i}{t_2}$ with $1\leq i<r$ is
 uncrossed in $\Gamma(G)$ if the red edges $\edge{t_1}{v_2}$  and
 $\edge{u_{i-1}}{u_{i+1}}$ in quadrangles $(u_{i-1}, u_i,
u_{i+1}, t_2)$ are ignored. In other words, moving $t_2$ in
direction of $\edge{s}{t_2}$ preserves the visibility of its
neighbors in $G_4$. Hence, the obtained drawing $\Gamma(G)$ is
biplanar.
 In addition,  the edges incident to $v_1$ are black, edge $\edge{t_1}{v_2}$ is
 crossed by edges of a fan at $t_2$, and  edges incident to
 $t_2$ are crossed twice if they are crossed in the 1-planar drawing of $G^-$.
Hence, the drawing is specialized.

Consider the size of the drawing. The 1-planar drawing of $G^-$ has
  size at most $h(4n-8) \times h(2n-4)$, where $h \leq 2n-4$ is the difference
  between $s$ and $t_2$ in y-dimension.
Edge $\edge{s}{t_2}$ has slope at least $1/4n$. Vertex $t_1$ is at
most $h(2n-4)$ units above $t_2$.
  Hence, $t_2$ is shifted at
most $16n^3$ units to the right. The ray from   $t_2$ along $t_1$
crosses the x-axis at most $128n^5$ units to the left of $v_1$,
since its slope is at least %$- \frac{1}{16n^3}$
$1/16n^3$ and $t_1$ is at height at most $8n^2$. Similarly, the ray
from $t_1$ along the top of $G_3$ to $v_2$ crosses the x-axis at
most $32n^4$ units to the right of $t_1$. Hence, the width of the
drawing is  $O(n^5)$, and the height is   $O(n^2)$.

The upper bound on the length of an edge is $O(n^5)$. The lower
bound is one for an uncrossed edge, since there is a grid drawing.
An uncrossed edge of $G$ is crossed in $\Gamma(G)$ if it is incident
to $t_2$ and is crossed by $\edge{t_1}{v_2}$, see
Fig.~\ref{fig:gamma-prime}. The first segment from $t_2$ has length
at least two, since $t_2$ is two units above $t_1$  and the slope of
$\edge{t_1}{v_2}$  is negative. The crossing point of
$\edge{t_1}{v_2}$ and $\edge{t_2}{v}$ for a vertex in $G_2$ or $G_4$
is at least one unit above $v$, where $t_1$ is placed at least two
units above the vertices of $G_4$ in the drawing of $G^-$, so that
the second segment of $\edge{t_2}{v}$ has length at least one.
\end{proof}

If $\seppair{u}{v}$ is a separation pair with components $H_0,
\ldots, H_r$, then $\edge{u}{v}$ is an uncrossed edge if the graph
is planar-maximal 1-planar or it can be rerouted so that it is
uncrossed, see Fig.~\ref{fig:manycomponents}. There is a first inner
face next to $\edge{u}{v}$ for $H_i$   if the inner components $H_j$
with $j > i \geq 0$ are removed. Hence, $H_{i+1},\ldots, H_r$ are
drawn in the interior of the first inner face for $H_i$. This face
is a triangle whose third vertex is a crossing point or the first
vertex of $H_i^-$.    In a biplanar drawing, the triangle can be
intersected by a red edge from a previous component, for example if
$v$ is the top of the outer triangle, as stated in
Lemmas~\ref{lem:3-connected-last} and \ref{lem:3-connected} and
illustrated in Figs.~\ref{fig:fpp-2} and \ref{fig:gamma-prime}.
Since the last vertex can be  chosen among the two top vertices of a
W-configuration, for every inner component $H_i$, there is a
specialized biplanar drawing so that the edges incident to $v$ are
black.

\begin{figure}[t]
\centering
\subfigure[ ] {    %0.35\textwidth}
    \includegraphics[scale=0.6]{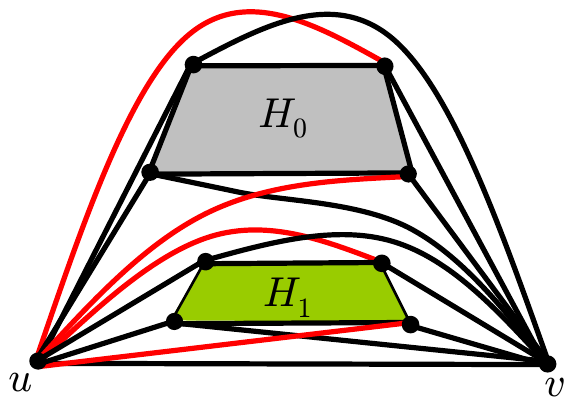}  %GD21-Wconf
      \label{fig:comp1}
  }
  \hspace{2mm}
\subfigure[ ] {    %0.35\textwidth}
    \includegraphics[scale=0.7]{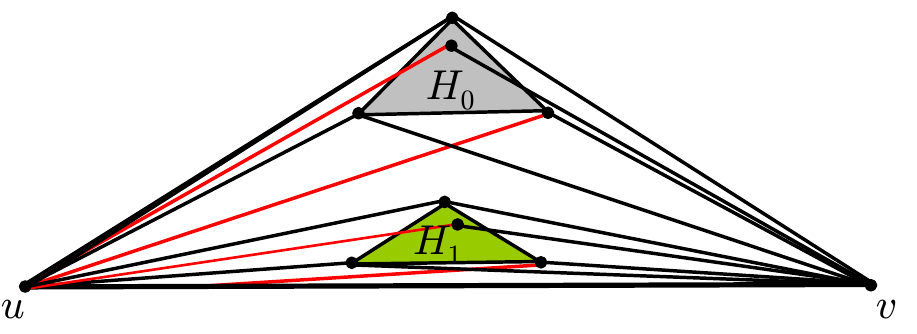}  %GD21-Wconf
      \label{fig:comp2}
  }
 \caption{(a) A 1-planar drawing with W-configurations for the outer component $H_0$
 and inner
 component $H_1$.
   (b) A biplanar drawing with $H_1$   drawn in the first inner face of
   $H_0$. Note that quadrangles in the outer face of the planar
   skeleton of a 1-planar drawing are turned into a triangles in the
   biplanar drawing. Drawing (b) is scaled down.
  }
  \label{fig:manycomponents}
\end{figure}

The vertices of an uncrossed edge are a candidate for a separation
pair. In fact, there is a 1-planar drawing if every uncrossed edge
$e$ of a 1-planar drawing is substituted by a 1-planar graph $G_e$,
so that $e$ is  uncrossed  in $G_e$. For example, edge $e$ can be
substituted by $K_6$, drawn as a W-configuration. In general, this
property does not hold for the vertices of a crossed edge, since the
addition of a component violates 1-planarity or the edge is
uncrossed in another drawing.

\begin{lemma} \label{lem:all-components}
A 1-planar graph  admits a specialized  straight-line biplanar
drawing if it has a single separation pair with an inner and an
outer component. The drawing of the outer component is scaled by
$O(n^5)$ and segments of uncrossed edges have length at least one.
\end{lemma}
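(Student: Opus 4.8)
The plan is to draw the two components separately with the earlier lemmas and then nest the inner drawing inside the outer one. Let $\seppair{u}{v}$ be the separation pair, $H_0$ the outer and $H_1$ the inner component, and recall that $\edge{u}{v}$ may be taken uncrossed. First I would apply Lemma~\ref{lem:draw-3conn}, Lemma~\ref{lem:3-connected-last} or Lemma~\ref{lem:3-connected} to $H_0$, according to whether its outer face contains a pair of crossed edges, to obtain a specialized straight-line biplanar drawing $\Gamma(H_0)$. Using the freedom to pick the last vertex among the two top vertices of a W-configuration, I would choose the drawing so that the edges incident to $v$ are black, as prepared in the discussion preceding the lemma. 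In $\Gamma(H_0)$ the edge $\edge{u}{v}$ bounds a first inner face $F$, a triangle whose apex is either the first vertex of $H_0^-$ or a crossing point. Independently, I would draw $H_1$ by the same lemmas to obtain a specialized drawing $\Gamma(H_1)$ whose outer face is an isosceles triangle with base $\edge{u}{v}$, of width $O(n^5)$ and height $O(n^2)$, in which segments of uncrossed edges have length at least one.

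Next I would place $\Gamma(H_1)$ inside $F$ by the non-degenerate affine map that sends the outer triangle of $\Gamma(H_1)$ onto $F$, matching $u$, $v$ and the apex. An affine map sends straight segments to straight segments and preserves which pairs of segments cross, so the image is again straight-line and biplanar and, being contained in the face $F$, introduces no crossing with the edges of $H_0$. The one or two pairs of crossed edges separating $H_0$ from $H_1$ are then reinserted straight-line in the triangular gap and two-coloured, one red and one black per pair, so that equally coloured edges stay disjoint; because the edges of $H_0$ incident to $v$ are black and $\Gamma(H_1)$ is specialized, this colouring is consistent and no same-colour crossing is created at the interface. The outer face of the combined drawing is that of $\Gamma(H_0)$, and each crossing inside $H_1$ meets a fan inherited from the specialized structure of $\Gamma(H_1)$, so the result is again specialized and tri-fan-crossing.

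Finally I would control segment lengths. The affine nesting shrinks the segments of $\Gamma(H_1)$ by roughly the ratio of the size of $F$ to the $O(n^5)$ width of $\Gamma(H_1)$, which may fall below one. To compensate I would first scale $\Gamma(H_0)$ up by a factor $\lambda$, which enlarges $F$ by the same factor while only lengthening the already $\geq 1$ segments of $H_0$. Since $\edge{u}{v}$ is uncrossed, its length, and hence the base of $F$, is at least one, so it only remains to bound the height of $F$ from below; granting this, $\lambda = O(n^5)$ makes $F$ wide and tall enough to hold $\Gamma(H_1)$ at unit scale, yielding the claimed $O(n^5)$ scaling of the outer component and segments of length at least one throughout. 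The main obstacle is exactly this quantitative estimate: when the apex of $F$ is a crossing point the height of $F$ can be small, and one must invoke the slope guarantees of the shift-method constructions in Lemmas~\ref{lem:3-connected-last} and \ref{lem:3-connected} (the apex lies at vertical distance $\Omega(n^{-3})$ from $\edge{u}{v}$ relative to the width) to certify that a scaling of only $O(n^5)$, rather than a larger power of $n$, suffices. The colour bookkeeping at the component interface is a secondary, routine point once the specialized structure of both pieces is in place.
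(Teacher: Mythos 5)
Your construction reproduces only the easy half of the paper's argument: the case in which $\edge{u}{v}$ is uncrossed in $\Gamma(H_0)$ and bounds an empty triangular first inner face, into which a (rotated and anisotropically scaled) copy of $\Gamma(H_1)$ is nested --- that part of your plan matches the paper. The genuine gap is the other case, which your proposal never considers: when $u$ or $v$ is a top vertex $t_2$ of a W-configuration of $H_0$, the specialized drawings produced by Lemmas~\ref{lem:3-connected-last} and~\ref{lem:3-connected} contain a red edge $e$ that crosses \emph{all} black edges incident to $v$, including the base $\edge{u}{v}$ itself, say at a point $p$. Then the first inner face is not a triangle with base $\edge{u}{v}$ but a quadrangle with a segment of $e$ on its boundary, and no affine placement of $\Gamma(H_1)$ whose base occupies all of $\edge{u}{v}$ can avoid $e$: some edges of $H_1$ must cross $e$, and with a plain affine copy you have no control over which ones (they could be red, destroying biplanarity, or be crossed too often, destroying the specialized/tri-fan-crossing structure). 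The paper resolves this with an idea absent from your proposal: scale $\Gamma(H_1)$ so that its base lies on the segment from $u$ to the midpoint of $u$ and $p$, entirely on one side of $e$, and then move the single vertex $v$ (by the shift method) to its true position in $\Gamma(H_0)$ on the other side of $e$. Exactly the edges of $H_1$ incident to $v$ then cross $e$; they are black because $\Gamma(H_1)$ is specialized, so biplanarity survives, $e$ is crossed by a fan at $v$, so the drawing remains specialized, and placing $v$ at half the distance to $p$ before the shift yields the length-at-least-one bound for the segments between $u$ and the crossings with $e$. This case is precisely why the lemma is stated for \emph{specialized} drawings and what makes it strong enough for the induction in Theorem~\ref{thm:geothick-2}.

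Three smaller points. First, your affine map ``matching $u$, $v$ and the apex'' is degenerate: it places the top vertex of $H_1$ exactly on the apex of $F$, which is a vertex or a crossing point of $\Gamma(H_0)$, and it lays the two outer edges of $\Gamma(H_1)$ on top of the boundary of $F$; the paper avoids this by scaling the height of $\Gamma(H_1)$ down to half the height of the first inner face. Second, the quantitative estimate you defer (``granting this'') is exactly what the paper supplies: the first inner face has height at least $1/(256\,n^5)$, because its base is a line of length at most $128n^5$ and its apex is either one unit above the base or the crossing point of edges of slope at least $1/(128n^5)$; your asserted bound of $\Omega(n^{-3})$ relative to the width is stated without argument, so by your own account the easy case is also left incomplete. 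Third, there is nothing to ``reinsert'' at the interface: the one or two pairs of crossed edges separating $H_0$ from $H_1$ are edges of the induced subgraphs $H_0$ and $H_1$ and are already drawn in $\Gamma(H_0)$ and $\Gamma(H_1)$; in the composed biplanar drawing they need not cross each other at all.
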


\begin{proof}

 Suppose that $\seppair{u}{v}$ is a separation pair so that
$G-\{u,v\}$ partitions into an outer component $H_0$ and an inner
component $H_1$. Edge $\edge{u}{v}$ is uncrossed in a 1-planar
drawing of $G$. It is the uncrossed base of $H_1$.  It is uncrossed
in a specialized straight-line biplanar drawing of $H_0$ if $u$ and
$v$ are not part of a W-configuration of $H_0$ by
Lemma~\ref{lem:draw-3conn}. Otherwise, there is a red edge
%$\edge{t_1}{v_2}$
that crosses black edges incident to $v$ by
Lemmas~\ref{lem:3-connected-last} and~\ref{lem:3-connected}, see
Fig.~\ref{Fig-L3-H1}.

Independently, compute  specialized straight-line biplanar drawings
$\Gamma(H_0)$ and $\Gamma(H_1)$, which each include edge
$\edge{u}{v}$. Let $w_1$ and $h_1$ be the width and the height of
$\Gamma(H_1)$, so that
 $\edge{u}{v}$ is a horizontal line of length $w_1$.
In the drawing of $H_0$ it is a line of length $w_0$ with slope
$\alpha$. The first inner face of  $\Gamma(H_0)$ next to
$\edge{u}{v}$ is a triangle with height $h_0$. By
Lemmas~\ref{lem:3-connected-last} and ~\ref{lem:3-connected}, the
width and height of $\Gamma(H_0)$ and $\Gamma(H_1)$ is at most $128
n^5$ and $2 n^2$, respectively, and the first inner face has height
$h \geq 1/(256 n^5)$, since it is a triangle $(v_1, v_2, p)$, where
$\edge{v_1}{v_2}$ is a horizontal line of length at most $128n^5$
and $p$ is the third vertex one unit above $\edge{u}{v}$ or the
crossing point of edges with slope at least $1/(128n^5)$.

 Scale $\Gamma(H_1)$ in x-dimension by $w_0/ w_1$  and
 in y-dimension by $h_0/2h_1$. Then  $\Gamma(H_1)$ fits into
the first inner face of $\Gamma(H_0)$ if it is rotated by $\alpha$.
The line for edge $\edge{u}{v}$ coincides in both drawings and the
height of $\Gamma(H_1)$ is scaled down  to half of the height of the
first inner face. Clearly,
 edges of $H_0$ and $H_1$ do not cross. In reverse, the drawing of $H_0$ is
scaled by $O(n^5)$ if $H_1$ is drawn as described in
Lemma~\ref{lem:3-connected}.

If $\edge{u}{v}$ is crossed in $\Gamma(H_0)$, then there is a
W-configuration, so that one of $u$ and $v$ is the top of a
W-configuration by Lemmas \ref{lem:3-connected-last} and
\ref{lem:3-connected}. Assume that edges incident to $v$ are crossed
by a red edge $e$ in $\Gamma(H_0)$, that is $v=t_2$ in
Figs.~\ref{fig:fpp-3} and~\ref{fig:gamma-prime}. The edges incident
to $v$ are black in $\Gamma(H_0)$ and $\Gamma(H_1)$ by the
specialization.
Let $p$ be the crossing point of $e$ and $\edge{u}{v}$. First,
$\Gamma(H_1)$  is scaled down so that its base $\edge{u}{v}$ is
mapped onto the line between $u$ and a point  in the middle between
$u$ and $p$ in $\Gamma(H_0)$, see Fig.~\ref{fig:L3H0H1}. As before,
$\Gamma(H_1)$ is compressed, so that it fits into the first inner
face of $\Gamma(H_0)$, which is a quadrangle with a segment of $e$
in its boundary. In reverse, the scaling of $H_0$ is bounded by
$O(n^5)$. Thereafter,   apply the shift method to vertex $v$  in
$\Gamma(H_1)$ and move it to (the copy of) $v$ in $\Gamma(H_0)$.
Thereby, all edges incident to $v$ in $\Gamma(H_1)$ cross the red
edge $e$. The segment of an edge between $u$ and the crossing point
with $e$ has length at least one if $\Gamma(H_1)$ is unscaled, since
$v$ is at half the distance between $u$ and $p$ before its shift. By
the previous lemmas, every segment of an uncrossed edge of $G$ has
length at least one.

 Since $\Gamma(H_1)$ is specialized, the edges incident to
$v$ are black. Hence, the combined drawing of $H_1$ and $H_0$ is
specialized, straight-line and biplanar.
\end{proof}

\begin{figure}[t]
\centering
\subfigure[ ] {    %0.35\textwidth
    \includegraphics[scale=0.7]{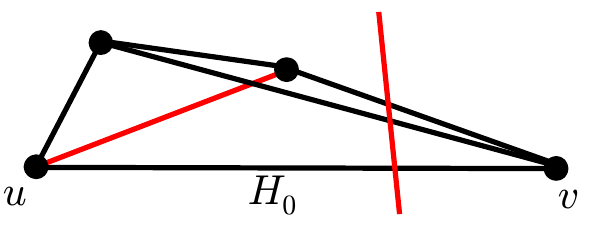}  %  Fig-biplanar1
      \label{Fig-L3-H1}
  }
\subfigure[ ] {    %0.35\textwidth}
   \includegraphics[scale=0.7]{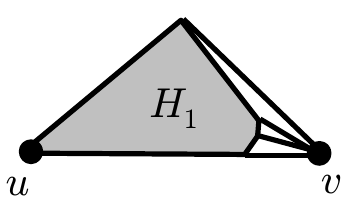}  %  Fig-biplanar1
      \label{fig:L3H1}
  }
  \subfigure[ ] {    %0.35\textwidth}
     \includegraphics[scale=0.7]{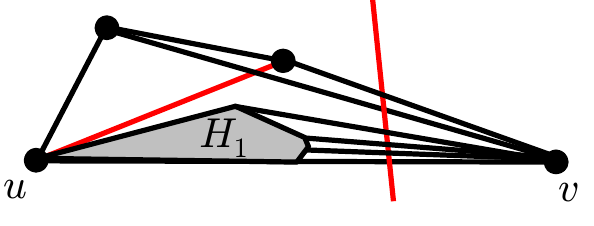}  %  Fig-biplanar1
      \label{fig:L3H0H1}
  }
 \caption{Illustration to the proof of Lemma~\ref{lem:all-components}.
 (a) A  biplanar drawing of a part of $H_0$ with base $\{u,v\}$,
 (b) a specialized drawing of $H_1$ with base $\{u,v\}$, and
 (c) the composition of $H_0$ and $H_1$, so that $H_1$ is drawn in the first
 inner face of $H_0$.
  }
  \label{fig:Lemma3}
\end{figure}

We now proceed by induction on the number of inner components at
every separation pair, and by recursion on inner components at
separation pairs.

\begin{theorem} \label{thm:geothick-2}
 Every  1-planar graph has a tri-fan-crossing straight-line biplanar
 drawing. The drawing uses numbers with $O(n \log n)$ many digits and can be computed in linear time.
\end{theorem}
\begin{proof}
Assume that $G$ is planar-maximal 1-planar. Otherwise, use the
planarization and compute a planar-maximal augmentation  from a
1-planar drawing.
% Simplify the 1-planar drawing by flips of inner components, so that
% B-configurations and W-configurations with separating triangles for
% the top vertices are avoided.
 Decompose $G$ into its 3-connected components and
store it in a decomposition tree, which may be an SPQR-tree
\cite{dett-gdavg-99}.  Let $H_0$ be the outer component which
remains from $G$ if all inner components at separation pairs are
removed. Then $H_0$ is triconnected. Compute a straight-line
specialized biplanar grid  drawing of $H_0$ according to
Lemmas~\ref{lem:draw-3conn}, \ref{lem:3-connected-last} or
\ref{lem:3-connected}.

Let $H_1, \ldots, H_k$ be inner components at a separation pair
$\seppair{u}{v}$ of $H_0$. Then $\edge{u}{v}$ is an uncrossed edge
of the 1-planar drawing of $H_0$. By induction, every $H_i$ admits a
specialized straight-line biplanar drawing with a horizontal line
for $\edge{u}{v}$. If $H_i$ has a pair of crossed edge in its outer
face, then choose the last vertex so that the edges incident to $v$
are black. By induction, there are  specialized biplanar drawings
$\Gamma(H_0,\ldots, H_i)$ and $\Gamma(H_{i+1})$, which can be
composed to a specialized biplanar drawing of $H_0, \ldots, H_{i+1}$
by Lemma~\ref{lem:all-components}. By induction there is a
tri-fan-crossing straight-line biplanar drawing for the inner
components at a separation pair of $H_0$.

Note that two separation pairs are independent in the sense  that
there are distinct first inner faces, so that the biplanar drawings
of components at two separation pairs do not interfere. By induction
and Lemma~\ref{lem:all-components}, there is a  tri-fan-crossing
straight-line biplanar drawing  $\Gamma(G)$.

A 1-planar graph with a single W-configuration can be drawn on a
grid of size $O(n^7)$ by Lemma~\ref{lem:3-connected}. For every
single inner component there is a scaling by  $O(n^5)$ by
Lemma~\ref{lem:all-components}. Otherwise, a scaling by $O(n^2)$
suffices. By induction and recursion this leads to a scaling of
  $O(n^n)$, since at least six vertices are necessary for a
W-configuration.  Hence, the coordinates of the drawing have  $O(n
\log n)$ many digits.

It takes linear time to compute $\Gamma(G)$ from a 1-planar drawing
of $G$, since every step can be done in linear time, such as the
planar-maximal augmentation, the SPQR-decomposition, the biplanar
drawing of single inner components, and the scaling by an affine
transformation.
\end{proof}

\begin{corollary} \label{cor:thickness-2}
Every 1-planar graph has geometric thickness at most two.
\end{corollary}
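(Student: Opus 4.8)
The plan is to read the corollary off directly from Theorem~\ref{thm:geothick-2}. By the definition recalled in the introduction, the geometric (equivalently, rectilinear) thickness of $G$ is the minimum number of colors in a straight-line drawing of $G$ such that edges of the same color never cross. So it suffices to exhibit a single straight-line drawing of $G$ together with a partition of its edges into at most two crossing-free color classes, and to observe that the drawing produced earlier already has exactly this form.

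Concretely, I would invoke Theorem~\ref{thm:geothick-2}, which produces a straight-line drawing $\Gamma(G)$ in which every edge is colored black or red. The notion of \emph{biplanar} used throughout the preliminaries is precisely that edges of the same color do not cross; hence the black class and the red class of $\Gamma(G)$ are each crossing-free. Because the drawing is straight-line and uses only two colors with no monochromatic crossing, it is a valid certificate that the geometric thickness of $G$ is at most two. The further properties asserted in Theorem~\ref{thm:geothick-2} — the tri-fan-crossing structure, the $O(n\log n)$-digit coordinates, and the linear running time — play no role in the thickness bound and can simply be discarded here. There is essentially no obstacle to overcome: all of the difficulty lies in constructing the biplanar drawing, which the theorem has already done, and the corollary is merely a restatement of that conclusion in the terminology of geometric thickness.
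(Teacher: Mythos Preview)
Your proposal is correct and matches the paper's approach: the corollary is stated immediately after Theorem~\ref{thm:geothick-2} without any separate proof, since the straight-line biplanar drawing produced there is by definition a two-colored straight-line drawing with no monochromatic crossings, i.e., a certificate of geometric thickness at most two.
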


By a planarity test, we obtain:

\begin{corollary} \label{cor:compute-thickness}
The thickness and the geometric thickness of a 1-planar graph can be
computed in linear time.
\end{corollary}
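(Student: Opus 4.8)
The plan is to reduce both computations to a single linear-time planarity test, using that both parameters are already pinned between $1$ and $2$. First I would record the standard inequality that the thickness of a graph is at most its geometric thickness: a straight-line partition into planar layers is in particular a partition into planar subgraphs. Writing $\theta(G)$ for thickness and $\bar\theta(G)$ for geometric thickness, this together with Corollary~\ref{cor:thickness-2} gives $1 \le \theta(G) \le \bar\theta(G) \le 2$ for every $1$-planar graph $G$, so each parameter is either $1$ or $2$ and it only remains to decide which.

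The key step is the observation that both parameters equal $1$ exactly when $G$ is planar. For thickness this is immediate from the definition. For geometric thickness, $\bar\theta(G) = 1$ means $G$ admits a straight-line drawing with no crossings at all, that is a straight-line planar drawing; by F\'ary's theorem~\cite{fary-48} such a drawing exists precisely when $G$ is planar. Hence the two-valued decision collapses to the single question of whether $G$ is planar.

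The resulting algorithm runs a linear-time planarity test on $G$. If $G$ is planar, it reports $\theta(G) = \bar\theta(G) = 1$. If $G$ is non-planar, then $\theta(G) \ge 2$; combined with $\bar\theta(G) \le 2$ from Corollary~\ref{cor:thickness-2} and $\theta(G) \le \bar\theta(G)$, both parameters equal $2$, which it reports. Since the planarity test dominates the running time and can be carried out in linear time, the whole procedure is linear.

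I do not anticipate a genuine obstacle here: the substance of the statement is carried by the upper bound of Corollary~\ref{cor:thickness-2}, which traps both values in $\{1,2\}$, and the remaining argument is only a two-way case split governed by planarity. The one point that deserves a sentence of care is that geometric thickness one is equivalent to planarity, via F\'ary's theorem, and not merely to thickness one; once that is noted, nothing further is required.
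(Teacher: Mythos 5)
Your proof is correct and follows exactly the route the paper intends: the paper's entire justification is the phrase ``By a planarity test, we obtain,'' and your argument---trapping both parameters in $\{1,2\}$ via Corollary~\ref{cor:thickness-2}, invoking F\'ary's theorem to equate geometric thickness one with planarity, and deciding between the two cases with a linear-time planarity test---is precisely the reasoning that phrase compresses. Nothing is missing, and your explicit note that geometric thickness one requires F\'ary's theorem (not just thickness one) is a worthwhile clarification the paper leaves implicit.
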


\section{Conclusion}
We have shown that every 1-planar  graph admits a straight-line
biplanar drawing, that is it has geometric thickness two. The
following problems remain.

(1) Does every 1-planar graph admit a straight-line biplanar drawing
on a grid of polynomial size?

% Is there a polynomial time algorithm which decides whether a
%  1-planar graph admits a straight-line 1-planar drawing?    YES, planarity test

(2) Does every 1-planar graph admit  a rectangle visibility
representation? Rectangle visibility \cite{hsv-rstg-99} specializes
T-shape visibility \cite{b-Tshape-18}.

(3) What is the geometric (general, book) thickness of other beyond
planar graphs, for example $k$-planar, fan-crossing, fan-crossing
free and quasi-planar graphs \cite{dlm-survey-beyond-19}?

%
% ---- Bibliography ----
%
% BibTeX users should specify bibliography style 'splncs04'.
% References will then be sorted and formatted in the correct style.
%
%\bibliographystyle{abbrv}
 \bibliographystyle{splncs04}
\bibliography{brandybibV8b.bib}
% \bibliography{book1planar.bib}
% \bibliography{book1planar20.bib}

\end{document}